\title{Treewidth of Outer $k$-Planar Graphs}
\author{Rafał Pyzik}
{Institute of Theoretical Computer Science, Faculty of Mathematics and Computer Science, Jagiellonian University, Kraków, Poland}
{rafal.pyzik@student.uj.edu.pl}
{https://orcid.org/0000-0003-4147-7000}
{}
\authorrunning{R. Pyzik}
\date{\today}
\keywords{treewidth, outer $k$-planar graphs, outer min-$k$-planar graphs, separation number}
\definecolor{defblue}{rgb}{0.121,0.47,0.705}
\definecolor{linkblue}{rgb}{0.098,0.098,0.4392}
\let\emph\relax
\DeclareTextFontCommand{\emph}{\color{defblue}\em}
\newcommand{\set}[1]{\left\{#1\right\}}
\newcommand{\norm}[1]{{\left| #1 \right|}}
\newcommand{\floor}[1]{{\left\lfloor #1 \right\rfloor}}
\newcommand{\brm}[1]{\operatorname{#1}}
\newcommand{\tw}[0]{\brm{tw}}
\newcommand{\bn}[0]{\brm{bn}}
\newcommand{\sn}[0]{\brm{sn}}
\newcommand{\dist}[0]{\brm{dist}}
\newcommand{\depth}[0]{\brm{depth}}
\newcommand{\org}[0]{\brm{org}}
\newcommand{\dual}[0]{\brm{dual}}
\newcommand{\boundary}[0]{\brm{boundary}}
\begin{document}

\maketitle

\begin{abstract}
Treewidth is an important structural graph parameter that quantifies how closely a graph resembles a tree-like structure. It has applications in many algorithmic and combinatorial problems. In this paper, we study the treewidth of \emph{outer $k$-planar} graphs, that is, graphs admitting a \emph{convex drawing} (a straight-line drawing where all vertices lie on a circle) in which every edge crosses at most $k$ other edges. We also consider the more general class of \emph{outer min-$k$-planar} graphs, which are graphs admitting a convex drawing where for every crossing of two edges at least one of these edges is crossed at most $k$ times.

Firman, Gutowski, Kryven, Okada and Wolff~[GD 2024] proved that every outer $k$-planar graph has treewidth at most $1.5k+2$ and provided a lower bound of $k+2$ for even $k$. We establish a lower bound of $1.5k+0.5$ for every odd $k$. Additionally, they showed that every outer min-$k$-planar graph has treewidth at most $3k+1$. We improve this upper bound to $3 \cdot \floor{k/2}+4$. 

Our approach also allows us to upper bound the \emph{separation number}, a parameter closely related to treewidth, of outer min-$k$-planar graphs by $2 \cdot \floor{k/2}+4$. This improves upon the previous bound of $2k+1$ and achieves a bound with an optimal multiplicative constant.
\end{abstract}

\section{Introduction}

In this paper, we study classes of graphs admitting a \emph{convex drawing} with a bounded number of edge crossings. A convex drawing is a straight-line drawing with all vertices drawn on a common circle. Bannister and Eppstein~\cite{be-cm1p2p-GD14, be-cm1p2pdgbtw-JGAA18} proved that the treewidth of graphs admitting a convex drawing with at most $k$ crossings in total is bounded by a linear function of $\sqrt{k}$. For fixed $k$, they also provided a linear-time algorithm deciding whether a given graph admits such a drawing (using Courcelle’s theorem~\cite{courcelle1990}). Another well-studied class of graphs in this context is the class of \emph{outer $k$-planar} graphs, that is, graphs that admit a convex drawing in which every edge crosses at most $k$ other edges. These graphs have treewidth bounded by a linear function of $k$, which was first proven by Wood and Telle~\cite[Proposition~8.5]{wt-pdcng-NYJM07}. The authors of~\cite{BeyondOuterplanarity}, also using Courcelle’s theorem, presented, for any fixed $k$, a linear-time algorithm that tests whether a given graph is maximal outer $k$-planar.
Recently, Kobayashi, Okada and Wolff~\cite{kobayashi2025}, for any fixed $k$, provided a polynomial-time algorithm to test whether a given graph is outer $k$-planar and proved that recognising outer $k$-planar graphs is {\sf XNLP}-hard.

For disambiguation, we recall the definition of \emph{$k$-outerplanar} graphs. A graph is \emph{outerplanar} if it has a planar drawing with all vertices lying on the outer face. A graph is \emph{$1$-outerplanar} when it is outerplanar. A graph is \emph{$k$-outerplanar} for $k > 1$ when it has a planar drawing such that after removing the vertices of the outer face, each of the remaining components is $(k-1)$-outerplanar.

We mainly study the treewidth of outer $k$-planar graphs and \emph{outer min-$k$-planar} graphs. A~graph is outer min-$k$-planar if it admits a convex drawing in which, for every crossing of two edges, at least one of these edges is crossed at most $k$ times. Firman, Gutowski, Kryven, Okada and Wolff~\cite{FirmanGKOW24} proved that outer $k$-planar graphs have treewidth at most $1.5k+2$ and outer min-$k$-planar graphs have treewidth at most $3k+1$. To obtain these results, they showed that every outer $k$-planar graph admits a \emph{triangulation} of the outer cycle such that every edge of the triangulation is crossed at most $k$ times by the edges of the graph. A~similar property was proven for outer min-$k$-planar graphs.

Another property closely related to treewidth is the separation number of a graph. A~\emph{separation} of a graph $G$ is a pair $(A, B)$ of subsets of $V(G)$ such that $A \cup B = V(G)$ and there are no edges between the sets $A \setminus B$ and $B \setminus A$. The \emph{order} of a separation is $\norm{A \cap B}$. A~separation is \emph{balanced} if $\norm{A\setminus B} \leq \frac{2}{3} \norm{V(G)}$ and $\norm{B\setminus A} \leq \frac{2}{3} \norm{V(G)}$. The \emph{separation number} of a graph $G$, denoted $\sn(G)$, is the minimum integer $a$ such that every subgraph of $G$ has a balanced separation of order at most $a$. Robertson and Seymour~\cite{ROBERTSON1986309} proved that $\sn(G) \leq \tw(G)+1$ for every graph~$G$. From the other side, Dvo\v{r}\'ak and Norin~\cite{dn-tgbs-JCTB19} showed that $\tw(G) \le 15\sn(G)$. Recently,  Houdrouge, Miraftab and Morin~\cite{houdrouge2025} provided a more constructive proof of an analogous inequality, but with a worse multiplicative constant.

\subparagraph*{Our contribution.}
The authors of~\cite{FirmanGKOW24} proved that every outer $k$-planar graph has treewidth at most $1.5k+2$. They also presented a lower bound of $k+2$ for every even $k$. We present an infinite family of outer $k$-planar graphs with treewidth at least $1.5k+0.5$, showing that the multiplicative constant $1.5$ in the upper bound cannot be improved; see \cref{sec:lower}.

We also improve the upper bounds for the treewidth and separation number of outer min-$k$-planar graphs. 
It was previously known that the treewidth of such graphs is at most $3k+1$ and the separation number is at most $2k+1$~\cite{FirmanGKOW24}.
We give an upper bound of $3 \cdot \floor{k/2} + 4$ for the treewidth (see \cref{sec:upper}) and an upper bound of $2 \cdot \floor{k/2} + 4$ for the separation number (see \cref{sec:separators}). Both multiplicative constants are optimal, as the lower bounds for outer $k$-planar graphs also hold for outer min-$k$-planar graphs -- namely, our lower bound of $1.5k+0.5$ for the treewidth and the lower bound of $k+2$ for the separation number presented in~\cite{FirmanGKOW24}.

\subparagraph*{Related results.}
A similar type of result is known for the pathwidth of \emph{2-layer $k$-planar} graphs. A 2-layer $k$-planar graph is a bipartite graph that admits a straight-line drawing where all vertices lie on two parallel lines and every edge crosses at most $k$ other edges.
Angelini, Da Lozzo, Förster, and Schneck \cite{2LayerkPlanar} showed that every 2-layer $k$-planar graph has pathwidth at most $k+1$. Recently, Okada \cite{okada2025} proved that this bound is sharp by constructing a 2-layer $k$-planar graph with pathwidth $k+1$ for every $k \geq 0$.

\section{Preliminaries}

Let $G$ be a graph. By $V(G)$ and $E(G)$ we denote the set of vertices and edges of $G$, respectively. For an edge that connects vertices $u$ and $v$, we use the compact notation $uv$, instead of $\set{u, v}$. For a directed edge, we use the standard notation $(u, v)$. Let $\deg(v)$ denote the degree of a vertex $v$, and let $\Delta(G)$ denote the maximum degree of a vertex of $G$.

For a graph $G$, a subgraph \emph{induced} by a set $U \subseteq V(G)$, denoted $G[U]$, is a subgraph with vertex set $U$ and all edges of $G$ between the vertices of $U$. A \emph{spanning tree} of a graph $G$ is a subgraph of $G$ containing all the vertices of $G$ that is a tree. By $\dist_G(v, w)$, we denote the distance (i.e. the length of the shortest path) between $v$ and $w$ in a graph $G$. For any tree $T$ rooted at vertex $r$, we define the depth of a vertex $v$ as $\depth_T(v) = \dist_T(r, v)$. We may omit subscripts if they are clear from the context.

A \emph{tree decomposition} $\mathcal{T} = (T, B)$ of a graph $G$ is a collection of \emph{bags}, $\{B_x: x \in V(T)\}$, indexed by the vertices of a tree $T$. The bags are subsets of $V(G)$ and satisfy the following properties:
\begin{enumerate}
    \item for every vertex $v \in V(G)$, the set $\{x: v \in B_x\}$ induces a non-empty subtree of $T$;
    \item for every edge $uv \in E(G)$, there exists a bag containing both $u$ and $v$.
\end{enumerate}
The \emph{width} of a given tree decomposition is the size of the largest bag minus one. The \emph{treewidth} of a graph $G$, denoted by $\tw(G)$, is the minimum width of any tree decomposition of $G$.

A set $\mathcal{B}$ of non-empty subsets of $V(G)$ is a \emph{bramble} if:
\begin{enumerate}
    \item for every $X \in \mathcal{B}$, the induced subgraph $G[X]$ is connected;
    \item for every $X_1, X_2 \in \mathcal{B}$, the induced subgraph $G[X_1 \cup X_2]$ is connected. In other words, each $X_1, X_2 \in \mathcal{B}$ either share a common vertex or there exists an edge of $G$ incident to both $X_1$ and~$X_2$. 
\end{enumerate}
A \emph{hitting set} of a bramble is a set of vertices with non-empty intersection with every element of $\mathcal{B}$. The \emph{order} of a bramble is the size of its smallest hitting set. The \emph{bramble number} of a graph $G$, denoted by $\bn(G)$, is the maximum order of any bramble of $G$.

The following result by Seymour and Thomas shows the relation between the bramble number and treewidth.
\begin{theorem}[Seymour and Thomas, \cite{seymour1993graph}] \label{thm:tw_bramble}
For every graph $G$, $\tw(G) = \bn(G)-1$.
\end{theorem}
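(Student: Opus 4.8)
The plan is to establish the two inequalities $\bn(G) \le \tw(G)+1$ and $\tw(G) \le \bn(G)-1$ separately; the first is routine while the second carries all the weight. For $\bn(G)\le\tw(G)+1$, I would fix a tree decomposition $(T,B)$ of $G$ of width $\tw(G)$ and an arbitrary bramble $\mathcal{B}$, and use two standard facts: (a) if $H$ is a connected subgraph of $G$ then $\set{x\in V(T) : B_x\cap V(H)\neq\emptyset}$ induces a subtree of $T$ — this follows from axiom~1 (each $\set{x : v\in B_x}$ is a subtree) together with axiom~2 applied along the edges of a spanning tree of $H$; and (b) the Helly property for subtrees of a tree: finitely many pairwise-intersecting subtrees share a common node. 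Applying (a) to each $X\in\mathcal{B}$ yields subtrees $T_X$, and for any $X_1,X_2\in\mathcal{B}$ the connectedness of $G[X_1\cup X_2]$ forces $T_{X_1}\cap T_{X_2}\neq\emptyset$ (they either already share a vertex of $G$, hence a node of $T$, or an edge of $G$ joins $X_1$ to $X_2$ and a bag containing that edge lies in both subtrees). By (b) some node $x$ lies in all $T_X$, so $B_x$ meets every element of $\mathcal{B}$; hence the order of $\mathcal{B}$ is at most $\norm{B_x}\le\tw(G)+1$, and maximising over brambles gives $\bn(G)\le\tw(G)+1$.

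For $\tw(G)\le\bn(G)-1$ I would set $k:=\bn(G)$, so $G$ has no bramble of order $\ge k+1$, and construct a tree decomposition of width at most $k-1$, i.e. with all bags of size at most $k$. The natural approach is induction, proving the slightly stronger claim that for every $W\subseteq V(G)$ with $\norm{W}\le k$ there is such a decomposition in which some bag contains $W$ (the desired statement being the case $W=\emptyset$). The induction runs on $\norm{V(G)\setminus W}$: if $V(G)=W$ a single bag suffices; if $\norm{W}<k$ we may add an arbitrary further vertex to $W$; and if $G-W$ is disconnected we recurse on $G[W\cup C]$ for each component $C$ of $G-W$ and glue the resulting decompositions at their $W$-bags. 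This reduces everything to the core case $\norm{W}=k$ with $G-W$ connected and nonempty, where one must produce a separation $(A,B)$ of $G$ of order at most $k$ with $W\subseteq A$ and $B\setminus A\neq\emptyset$, recurse on the two (smaller) sides, and reassemble.

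The step I expect to be the main obstacle is exactly the existence of a suitable separation in that core case: the point is that if no admissible separating set worked, the components it would leave uncovered could be organised into a single family of vertex sets that is a bramble, and a careful count forces its order to be at least $k+1$, contradicting the hypothesis. Turning this into a proof means pinning down the right notion of ``admissible'', verifying that the assembled family satisfies both bramble axioms, bounding its order from below, and then checking that the glued tree decomposition still obeys both tree-decomposition axioms with width at most $k-1$; this is the substance of the Seymour--Thomas argument (see also the short proof of Bellenbaum and Diestel), and is markedly more delicate than the forward direction.

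An alternative I would keep in mind is to route the hard direction through \emph{havens}: a bramble of order $k$ induces a haven of order $k$ by assigning to each vertex set $S$ with $\norm{S}<k$ the unique component of $G-S$ that contains every bramble element disjoint from $S$, and conversely a haven of order $k$ can be expanded into a bramble of order at least $k$; one then argues that a graph with no haven of order $k+1$ has treewidth at most $k-1$. This repackages the obstruction without removing the essential difficulty, which is still the construction of the tree decomposition. Combining the two inequalities gives $\tw(G)=\bn(G)-1$.
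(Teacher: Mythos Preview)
The paper does not contain a proof of this theorem: it is stated with attribution to Seymour and Thomas and cited as \cite{seymour1993graph}, with no argument given. So there is nothing to compare your proposal against on the paper's side.

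That said, your outline is a faithful sketch of the standard proof. The inequality $\bn(G)\le\tw(G)+1$ via the Helly property of subtrees is exactly the usual argument, and for the converse your inductive scheme (strengthening to ``some bag contains a prescribed set $W$ of size $\le k$'', reducing to the case $\norm{W}=k$ with $G-W$ connected, and then extracting a suitable separation by assembling a would-be bramble of order $\ge k+1$ from a failure) is precisely the Bellenbaum--Diestel short proof you mention. The alternative detour through havens is also standard and, as you note, does not avoid the hard step. Nothing in your plan is wrong; the only caveat is that the ``main obstacle'' paragraph is still only a plan --- the actual construction of the bramble from the failed separations and the verification of its order require care --- but since the paper itself simply cites the result, a detailed proof is not expected here.
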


We say that a graph $G$ is a \emph{minor} of a graph $H$ if $G$ can be obtained from $H$ by a sequence of vertex deletions, edge deletions or edge contractions. The contraction of an edge $uv$ is an operation that replaces vertices $u$ and $v$ with a new vertex adjacent to every vertex other than $u$ and $v$ that was adjacent to $u$ or $v$. It is a well-known fact that if $G$ is a minor of $H$, then $\tw(G) \leq \tw(H)$. A proof of this fact can be found in \cite{Bodlaender1998}.

In the remainder of this section, we introduce some notation and simple observations regarding drawings. A \emph{convex drawing} of a graph $G$ is a straight-line drawing where the vertices of $G$ are placed on distinct points of a circle. Given a cyclic order $(v_1, \ldots, v_n)$ of vertices, we say that an edge $v_iv_j$ with $i < j$ \emph{crosses} an edge $v_{i'}v_{j'}$ with $i' < j'$ if either $1 \leq i < i' < j < j' \leq n$ or $1 \leq i' < i < j' < j \leq n$. We only consider convex drawings where no three pairwise non-adjacent edges pass through the same point. An \emph{outer $k$-planar drawing} of a graph is a convex drawing such that every edge crosses at most $k$ other edges. An \emph{outer min-$k$-planar drawing} of a graph is a convex drawing such that for every crossing of two edges, at least one of these edges crosses at most $k$ other edges.

An outer min-$k$-planar graph $G$ is \emph{maximal outer min-$k$-planar} if for every $\{u, v\} \subseteq V(G)$ with $uv \not\in E(G)$, the graph $G+uv$ is not outer min-$k$-planar.

\begin{observation} \label{obs:max_okpl}
    Let $G$ be a maximal outer min-$k$-planar graph with at least three vertices. Then, in every outer min-$k$-planar drawing of $G$, the outer face is bounded by a cycle.
\end{observation}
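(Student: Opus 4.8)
The plan is to show that in any outer $k$-planar drawing of a maximal outer $k$-planar graph $G$ with $n \geq 3$ vertices, all edges of the convex hull (i.e.\ all the ``short'' edges $v_iv_{i+1}$ joining consecutive vertices on the circle, indices mod $n$) are present in $G$. Since these $n$ edges form a simple cycle bounding the outer face, this gives the claim. So it suffices to prove: for every $i$, the edge $v_iv_{i+1}$ belongs to $E(G)$.

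First I would observe the key geometric fact: the straight-line segment $v_iv_{i+1}$ between two circularly consecutive vertices does not cross any edge of the drawing at all. Indeed, if an edge $v_{i'}v_{j'}$ crossed $v_iv_{i+1}$, then by the definition of crossing in a convex drawing one of $i', j'$ would have to lie strictly between $i$ and $i+1$ in the cyclic order, which is impossible since no vertex lies there. Hence adding $v_iv_{i+1}$ to the drawing (if it is not already an edge) introduces no new crossing, so $G + v_iv_{i+1}$ is still outer $k$-planar; but then maximality of $G$ forces $v_iv_{i+1} \in E(G)$, unless $v_iv_{i+1}$ is not a legal non-edge to add — the only obstruction being $v_i = v_{i+1}$, which cannot happen since the vertices are placed on distinct points. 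One minor point to handle: the definition of maximal outer $k$-planar quantifies over $e \in V^2(G) \setminus E(G)$, so I just need $v_iv_{i+1}$ to be a genuine pair of distinct non-adjacent vertices; if it is already an edge there is nothing to prove.

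Having established that the cycle $C = v_1v_2\cdots v_nv_1$ is a subgraph of $G$ and is drawn as the boundary of the convex hull of the point set, I would then argue that the outer face of the drawing of $G$ is exactly the unbounded region outside this convex polygon. Every edge of $G$ other than the hull edges is a chord of the polygon, hence lies inside the closed convex hull, so it cannot touch the outer face except possibly at its endpoints (which are on the cycle $C$). Therefore the boundary walk of the outer face uses only the edges of $C$, and since $C$ is a simple cycle through all $n$ vertices, the outer face is bounded by a simple cycle, as required.

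The main obstacle is essentially bookkeeping rather than a deep difficulty: one must be careful with the degenerate possibility that some putative hull edge $v_iv_{i+1}$ coincides with an already-present edge (then nothing to add) versus is absent (then maximality applies), and one must make the ``a chord of a convex polygon lies inside it'' statement precise enough to conclude the outer face is bounded exactly by $C$. The case $n \geq 3$ is needed precisely so that $C$ is a genuine cycle (for $n = 2$ the ``boundary'' degenerates), which is why the hypothesis appears in the statement.
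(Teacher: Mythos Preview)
Your proposal is correct and follows essentially the same approach as the paper: show that every pair of circularly consecutive vertices must be adjacent (since otherwise the missing edge could be added without introducing any crossings, contradicting maximality), and conclude that the convex hull cycle bounds the outer face. The paper's proof is just a terser version of yours, omitting the explicit verification that a hull edge crosses nothing and that chords lie inside the polygon.
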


\begin{proof}
    Consider an outer min-$k$-planar drawing $\Gamma$ of $G$. Let $u$ and $v$ be consecutive vertices in the cyclic order defined by $\Gamma$. Suppose, for contradiction, that $uv \not\in E(G)$. Note that the graph $G+uv$ has an outer min-$k$-planar drawing defined by the same cyclic order as $\Gamma$, contradicting the maximality of $G$.
\end{proof}

A graph $G$ is \emph{expanded outer min-$k$-planar} if $G$ is an outer min-$k$-planar graph with $\Delta(G) \leq 3$ and its outer face is bounded by a cycle in some outer min-$k$-planar drawing of~$G$.

\begin{observation} \label{obs:expand_okpl}
    Every outer min-$k$-planar graph $G$ is a minor of an expanded outer min-$k$-planar graph $G'$.
\end{observation}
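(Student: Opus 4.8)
The plan is to start from an outer $k$-planar drawing of $G$ and perform two kinds of modifications, each preserving outer $k$-planarity while only enlarging the graph, so that $G$ stays a minor. \emph{First, make the outer face a simple cycle.} Let $(v_1,\dots,v_n)$ be the cyclic order of the vertices in the drawing; if $n\le 2$ then $G$ is a minor of the triangle (which is expanded outer $k$-planar), so assume $n\ge 3$. Add to $G$ every missing edge $v_iv_{i+1}$ (indices modulo $n$): each such edge joins two consecutive points of the circle, so it is crossed by nothing and does not change the crossing count of any other edge, hence the supergraph $G^{+}\supseteq G$ is still outer $k$-planar and its outer face is now bounded by the simple cycle $v_1v_2\cdots v_n$. (Alternatively, replace $G$ by a maximal outer $k$-planar supergraph and apply \cref{obs:max_okpl}.)

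\emph{Second, expand the high-degree vertices.} While some vertex $v=v_i$ has $d:=\deg(v)\ge 4$: the edges at $v$ have a natural angular order, listing the neighbours of $v$ starting at $v_{i+1}$, wrapping around the circle, and ending at $v_{i-1}$, so the two boundary edges at $v$ are the two extremes of this order. Replace the point $v$ by $d$ new points $p_1,\dots,p_d$ placed, in the matching direction, inside a tiny arc around the former location of $v$; add the path $p_1p_2\cdots p_d$; and reattach the $t$-th edge of $v$ in angular order so that its new endpoint is $p_t$. Every $p_t$ then has degree at most $3$ (two path-edges and one reattached edge for an internal $p_t$, one of each at the two ends), contracting the path $p_1\cdots p_d$ recovers $v$ with all its incident edges so $G^{+}$ (hence $G$) is still a minor, and the outer face is still bounded by a simple cycle, obtained from $v_1\cdots v_n$ by substituting $p_1\cdots p_d$ for $v_i$. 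Iterating over all vertices of degree at least $4$ yields a graph $G'$ with $\Delta(G')\le 3$.

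\emph{The crux} is to check that a single expansion does not increase the crossing count of any edge, so that $G'$ remains outer $k$-planar. Since $p_1,\dots,p_d$ lie in an arbitrarily small arc around the old position of $v$, a chord $v_av_b$ with $a,b\ne i$ separates exactly the same endpoint pairs as before, so its crossing count is unchanged, and the path edges $p_tp_{t+1}$ join consecutive circle points and are crossed by nothing. What remains is (i) that two reattached edges $u_sp_s$ and $u_tp_t$ never cross, and (ii) that a reattached edge never crosses a path edge $p_jp_{j+1}$. For (i), the edges of the pencil at $v$ are pairwise non-crossing, and reattaching them in angular order keeps them so -- a short case analysis on the positions of $u_s,u_t$ relative to $v$. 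For (ii), the chord $u_sp_s$ splits $\{p_1,\dots,p_d\}$ into $\{p_1,\dots,p_{s-1}\}$ and $\{p_{s+1},\dots,p_d\}$ and so separates no two consecutive $p_j$. Finally, perturb the new points slightly so that no three edges are concurrent.
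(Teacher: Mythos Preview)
Your proof is correct and follows essentially the same approach as the paper: pass to a supergraph whose outer face is a Hamiltonian cycle, then replace each high-degree vertex by a short path whose vertices absorb the incident edges in their angular order. The only cosmetic difference is that the paper uses $\deg(v)-2$ new vertices (so the two extremal path vertices each pick up two of the old edges and have degree exactly~$3$), whereas you use $\deg(v)$ new vertices (so the two extremal path vertices have degree~$2$); your version is arguably cleaner, and your explicit case analysis of why no new crossings arise spells out what the paper dismisses with ``naturally correspond.''
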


\begin{proof}
     Let us assume that $G$ is maximal outer min-$k$-planar. Now, in order to obtain $G'$ from $G$, we perform the following transformation to every vertex $v$ of $G$ with $\deg(v) \geq 4$. The transformation is depicted in \cref{fig:expansion}. Let $w_0, w_1, \ldots, w_s, w_{s+1}$ be all neighbors of $v$ in clockwise order, with edges $vw_0$ and $vw_{s+1}$ incident to the outer face of $G$. We replace $v$ with a path $v_1, \ldots, v_s$, put it on the outer face of $G$ in counterclockwise order, in the place of~$v$. We connect this path to vertices $w_0$ and $w_{s+1}$ by adding edges $v_1w_0$ and $v_sw_{s+1}$. Finally, for every $1 \leq i \leq s$, we add an edge $v_iw_i$ that corresponds to an edge $vw_i$ in the original graph. It is easy to see that $G$ is a minor of $G'$ and the ordering of corresponding edges in $G'$ matches the one in $G$. Moreover, the crossings in the resulting graph naturally correspond to the crossings in the original graph.
\end{proof}

\begin{figure}
    \centering
    \includegraphics[width=0.9\linewidth]{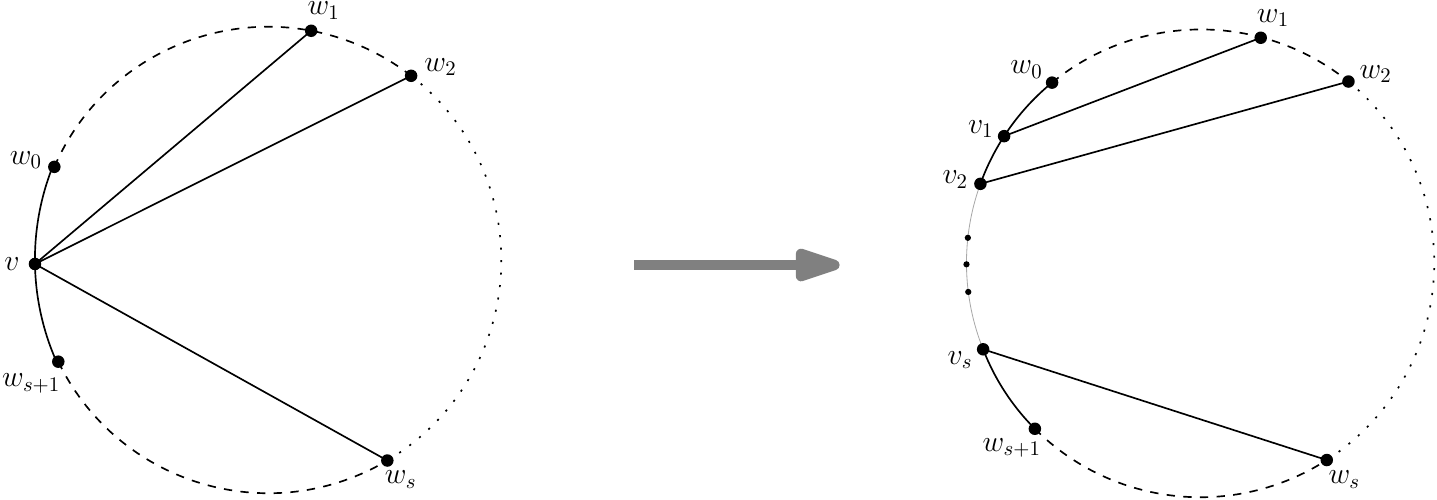}
    \caption{The transformation described in \cref{obs:expand_okpl}.}
    \label{fig:expansion}
\end{figure}

The vertices $v_1, \ldots, v_s$ obtained in the proof as a replacement of $v$ are called \emph{images} of $v$. The vertex $v$ is the \emph{origin} of these vertices, which we denote as $\org(v_i) = v$. If the transformation was not performed for some vertex $v$ of $G$, i.e.~$\deg(v) \leq 3$, then $v$ is an image and origin of itself.

Since removing edges increases neither the treewidth nor the separation number, we are interested in the properties of maximal outer min-$k$-planar graphs. Also, taking a minor does not increase treewidth, so we work with expanded graphs when establishing upper bounds on treewidth.

\section{Lower bound on the treewidth of outer $k$-planar graphs}
\label{sec:lower}

In this section, we construct an infinite family of outer $k$-planar graphs with treewidth at least $1.5k+0.5$. This improves the previous lower bound of $k+2$ that was presented in~\cite{FirmanGKOW24}. We begin by defining the necessary graphs. 

For positive integers $m$ and $n$, let $X_{m,n}$ denote the grid of $m$ rows and $n$ columns, i.e.~a graph with \[V(X_{m,n}) = \set{x_{i,j}: 1\le i \le m, 1\le j \le n} \text{ and } E(X_{m, n}) = \set{x_{i,j}x_{k,l}: \norm{i-k} + \norm{j-l} = 1}.\]
For a positive integer $k$, let $Q_k$ be a copy of the grid $X_{2k,2k}$ and let $R_k$ be a copy of $X_{2k(k+1),k}$. Denote by $v_{i,j}$, for $1\le i,j \le 2k$, the vertex in the $i$-th row and $j$-th column of $Q_k$, and by $u_{i,j}$, for $1 \le i \le 2k(k+1)$, $1 \le j \le k$, the vertex in the $i$-th row and $j$-th column of $R_k$. Let $G_k$ be a graph such that $V(G_k) = V(Q_k) \cup V(R_k)$ and 
\[E(G_k) = E(Q_k) \cup E(R_k) \cup \set{v_{i, 2k}u_{(i-1)(k+1)+j,1}: 1 \leq i \leq 2k, 1 \leq j \leq k+1};\]
see \cref{fig:graph_Gk}. 
For $1\le i \le 2k(k+1)$, let the $i$-th \emph{extended row} of $G_k$ be the union of the $i$-th row of $R_k$ and the $\lceil\frac{i}{k+1}\rceil$-th row of $Q_k$. Note that each row of $Q_k$ is contained in $k+1$ extended rows and the graph induced by each extended row is a path.

The graph $G_k$ was previously introduced by Kammer and Tholey~\cite{kammer2009lower} as an example of tightness of the upper bound on the treewidth of $k$-outerplanar graphs. They used the \emph{cops and robber game} to establish a lower bound on the treewidth of $G_k$. Below, we present a proof using brambles.

\begin{figure}
    \centering
    \includegraphics[width=0.7\linewidth]{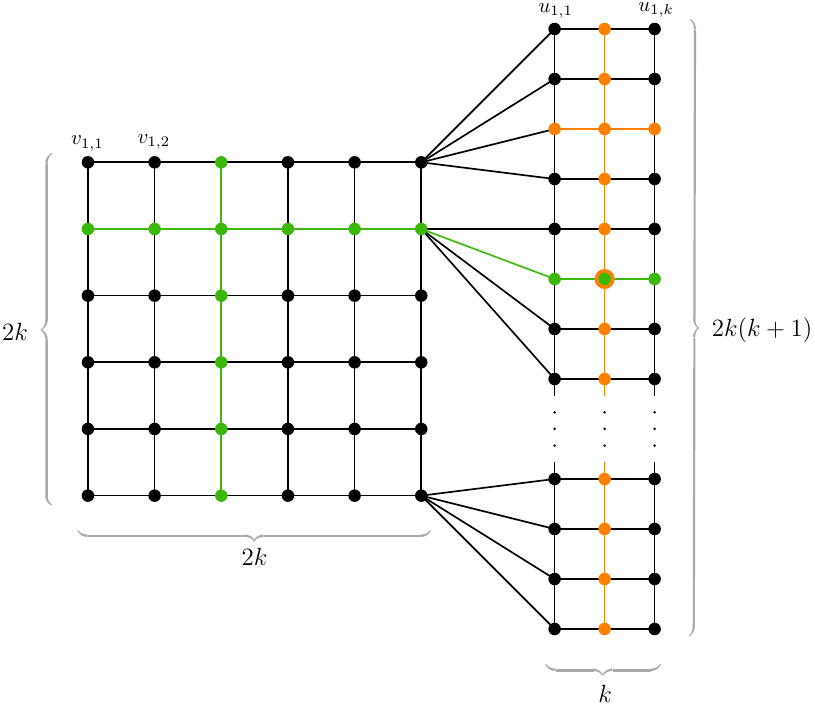}
    \caption{The graph $G_k$, for $k=3$, with a subgraph of $\mathcal{B}_1$ colored green and a subgraph of $\mathcal{B}_2$ colored orange.}
    \label{fig:graph_Gk}
\end{figure}

\begin{theorem}[Kammer and Tholey, \cite{kammer2009lower}] \label{thm:Hk_tw}
    For every $k \geq 1$, $\tw(G_k) = 3k-1$.
\end{theorem}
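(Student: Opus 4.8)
The plan is to prove $\tw(G_k)\le 3k-1$ and $\tw(G_k)\ge 3k-1$ separately: the first by an explicit tree decomposition, the second by exhibiting a bramble of order $3k$ and invoking \cref{thm:tw_bramble}.

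\textbf{Upper bound.} Take a path decomposition $\mathcal P_Q$ of $Q_k$ sweeping the columns from left to right (all bags of size $\le 2k+1$, last bag exactly $\{v_{1,2k},\dots,v_{2k,2k}\}$), and a path decomposition $\mathcal P_R$ of $R_k$ sweeping the rows from top to bottom (all bags of size $\le k+1$). For $1\le i\le 2k$, the $k+1$ rows of $R_k$ adjacent to $v_{i,2k}$ are handled by a contiguous piece $\mathcal P_R^{(i)}$ of $\mathcal P_R$, and these pieces occur in the order $1,2,\dots,2k$ along $\mathcal P_R$. Join the last bag of $\mathcal P_Q$ by a tree edge to one fixed single‑row bag $B$ of $\mathcal P_R$ lying inside $\mathcal P_R^{(k)}$, i.e.\ roughly in the middle of $\mathcal P_R$, and then add $v_{i,2k}$ to every bag of $\mathcal P_R^{(i)}$ and to every bag on the $\mathcal P_R$‑path from $\mathcal P_R^{(i)}$ to $B$. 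This is a valid tree decomposition: it covers every edge $v_{i,2k}u_{(i-1)(k+1)+j,1}$ and keeps each vertex class connected. Any bag of $\mathcal P_R$ other than $B$ lies strictly on one side of $B$ and hence misses some $v_{i,2k}$, so $B$ is the unique bag that collects all of $v_{1,2k},\dots,v_{2k,2k}$; thus $|B|=k+2k=3k$ and every bag has size at most $3k$. (The single case $k=1$ is checked directly.) Hence $\tw(G_k)\le 3k-1$.

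\textbf{The bramble.} Call the $2k$ columns of $Q_k$ together with the $k$ columns of $R_k$ the \emph{rails}: there are $3k$ of them, they are pairwise disjoint, each induces a path, and each meets every extended row in exactly one vertex. Let $\mathcal B=\mathcal B_1\cup\mathcal B_2$, where $\mathcal B_1$ consists of all sets $(\text{rail})\cup(\text{extended row})$ and $\mathcal B_2$ consists of all sets $(\text{column of }R_k)\cup(\text{row of }R_k)$ (the cross‑bramble of $R_k$). Each set induces a connected subgraph; two sets of $\mathcal B_1$ touch because a rail meets every extended row; a set of $\mathcal B_2$ touches every set of $\mathcal B_1$ (it shares a vertex with the relevant row of $R_k$ inside the extended row, or with the relevant column of $R_k$); and two sets of $\mathcal B_2$ touch since $\mathcal B_2$ is a bramble of $R_k$. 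So $\mathcal B$ is a bramble.

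\textbf{Order of $\mathcal B$.} Let $H$ be a hitting set of $\mathcal B$. Since $H$ meets $(\text{rail})\cup(\text{extended row})$ for every pair, either (A) $H$ meets all $3k$ rails, whence $|H|\ge 3k$ because the rails are disjoint; or (B) $H$ meets every extended row. In case (B), fix $m\in[2k]$: the $k+1$ extended rows through row $m$ of $Q_k$ all contain that $Q_k$‑row, so either $H$ meets row $m$ of $Q_k$, or $H$ has a vertex in each of the $k+1$ rows of $R_k$ below $v_{m,2k}$, i.e.\ at least $k+1$ vertices in that block of $R_k$. Let $Z$ be the set of $m$ of the latter type. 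Then $|H\cap V(Q_k)|\ge 2k-|Z|$ and $|H\cap V(R_k)|\ge (k+1)|Z|$; moreover $H\cap V(R_k)$ must be a hitting set of $\mathcal B_2$, whose order is $k$, so $|H\cap V(R_k)|\ge k$. Hence $|H|\ge (2k-|Z|)+\max\{(k+1)|Z|,\,k\}\ge 3k$ for every $|Z|\ge 0$. So $\bn(G_k)\ge 3k$, and by \cref{thm:tw_bramble} $\tw(G_k)\ge 3k-1$; combined with the upper bound, $\tw(G_k)=3k-1$.

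\textbf{Where the work is.} The upper bound is routine bookkeeping once one sees that column $2k$ of $Q_k$ must be routed through the middle of $R_k$'s decomposition. The real content is the lower bound, specifically pinning the constant to $3k$ rather than $2k$: the rails alone yield only a bramble of order $2k$, since meeting all extended rows is cheap (all $k+1$ extended rows through one row of $Q_k$ die to a single vertex of that $Q_k$‑row). The family $\mathcal B_2$ is included precisely to force case (B) to pay an extra $k$ vertices \emph{inside} $R_k$ that cannot be recycled by the ``one $Q_k$‑vertex per block'' strategy, and the two delicate points are (i) checking that $\mathcal B_1\cup\mathcal B_2$ is a single bramble and (ii) the $|Z|$‑case analysis showing the $2k$ ``$Q_k$‑side'' hits and the $k$ ``$R_k$‑side'' hits are genuinely disjoint.
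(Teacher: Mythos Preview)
Your proof is correct. The lower-bound argument is essentially the paper's: both build the bramble $\mathcal B_1\cup\mathcal B_2$ with $\mathcal B_2$ the cross bramble of $R_k$, and both finish by a ``few $Q_k$-rows missed $\Rightarrow$ many $R_k$-rows must be hit'' count. Your $\mathcal B_1$ is slightly larger---you allow all $3k$ rails, whereas the paper pairs extended rows only with the $2k$ columns of $Q_k$---which gives you the clean Case~(A); the paper instead runs the whole argument through the single estimate $q+r\ge 3k$ with $q=|H\cap V(Q_k)|$ and $r=|H\cap V(R_k)|$, which is marginally tidier but the same idea.

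For the upper bound the paper takes a different and much shorter route: it simply notes that the drawing of $G_k$ in \cref{fig:graph_Gk} is $k$-outerplanar and cites Bodlaender's theorem that $k$-outerplanar graphs have treewidth at most $3k-1$. Your explicit path-decomposition splice is more work but self-contained, avoiding the external dependency; it also makes transparent \emph{where} the width $3k-1$ is attained (the single bag $B$ that absorbs all of column $2k$ of $Q_k$), which the citation hides.
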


\begin{proof}
    Notice that the drawing of $G_k$ in \cref{fig:graph_Gk} is $k$-outerplanar. By the fact that $k$-outerplanar graphs have treewidth at most $3k-1$ \cite[Theorem 83]{Bodlaender1998}, we get $\tw(G_k) \leq 3k-1$.

    To prove that $\tw(G_k) \geq 3k-1$, we construct a bramble of order $3k$. Then, using \cref{thm:tw_bramble}, we get $\tw(G_k) \geq 3k-1$. 
    Let $\mathcal{B}_1$ be a family consisting of every subset of $V(G_k)$ that is a union of an extended row of $G_k$ and a column of $Q_k$. Let $\mathcal{B}_2$ be a family consisting of every subset of $V(G_k)$ that is a union of a row of $R_k$ and a column of $R_k$. The set $\mathcal{B} = \mathcal{B}_1 \cup \mathcal{B}_2$ forms a bramble of $G_k$, since each subgraph induced by an element of $\mathcal{B}$ is connected and every two such subgraphs have at least one common vertex.

    Consider any hitting set $S$ of $\mathcal{B}$. Let $q$ and $r$ be the number of vertices of $S$ in $V(Q_k)$ and in $V(R_k)$, respectively. We would like to show that $|S| = q+r \geq 3k$. Note that $r \geq k$, as otherwise there is a row and a column of $R_k$ not containing any element of $S$, and thus there is an element of $\mathcal{B}_2$ not hit by $S$.

    If $q \geq 2k$, then $q+r \geq 3k$. Otherwise, let $q = 2k-l$ for some positive integer $l$. Now, we can find at least $l$ columns and at least $l$ rows of $Q_k$ not intersecting $S$. These $l$ rows are contained in $l(k+1)$ extended rows. Each of them has to intersect $S$ at some vertex of $R_k$, because otherwise we can find a column of $Q_k$ and an extended row not intersecting $S$ that form an element of $\mathcal{B}_1$. The extended rows restricted to $R_k$ are pairwise disjoint, so we have $r \geq l(k+1)$. Summing up, we get $q+r \geq 2k-l + l(k+1) = 2k + lk \geq 2k+k = 3k$, which concludes the proof.
\end{proof}

Let $F_k$ be the following modification of $G_k$ depicted in \cref{fig:graph_Fk}.
We set $\ell(i) = (k-i)(k+1)$ for $1 \le i \le k$, and $\ell(i)=(i-k-1)(k+1)$ for $k+1 \le i \le 2k$.
We remove every edge between the grids $Q_k$ and $R_k$.
For every $1 \leq i \leq 2k$, we add a path $Z_i$ of length $\ell(i)$ on new vertices $z_{i,0},z_{i,1},\ldots,z_{i,\ell(i)}$.
Note that
\[k^2-1 = \norm{V(Z_1)}=\norm{V(Z_{2k})} > \norm{V(Z_2)}=\norm{V(Z_{2k-1})} > \ldots > \norm{V(Z_k)}=\norm{V(Z_{k+1})}=1.\]
We also add a path $W_i$ of length $k$ on new vertices $w_{i,1},w_{i,2},\ldots,w_{i,k+1}$.
We connect $v_{i,2k}$ with $Z_i$ by adding the edge $v_{i,2k}z_{i,0}$. Next, we connect $Z_i$ with $W_i$ by adding the edge $z_{i,\ell(i)}w_{i,k+1}$ for $1 \leq i \leq k$, or the edge $z_{i,\ell(i)}w_{i,1}$ for $k+1 \leq i \leq 2k$. Finally, we connect $W_i$ with $R_k$ by adding the edges $w_{i,j}u_{(i-1)(k+1)+j,1}$ for every $1 \le j \le k+1$.

To see that $G_k$ is a minor of $F_k$, it is enough to contract, for every $1 \le i \le 2k$, vertex $v_{i,2k}$ with all vertices of the paths $Z_i$ and $W_i$. Since taking a minor does not increase the treewidth, we obtain the following corollary.

\begin{figure}
    \centering
    \includegraphics[width=1\linewidth]{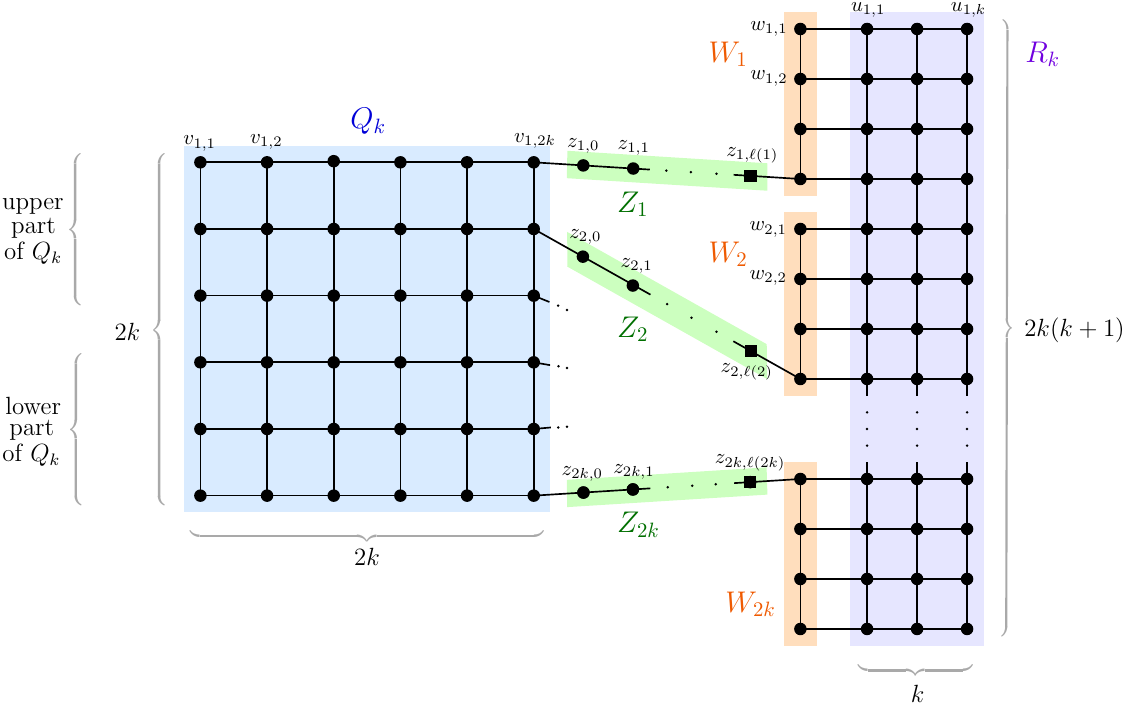}
    \caption{The graph $F_k$, which is a modification of the graph $G_k$, for $k=3$.}
    \label{fig:graph_Fk}
\end{figure}

\begin{corollary} \label{cor:Fk_tw}
    For every $k \geq 1$, $\tw(F_k) \geq 3k-1$.
\end{corollary}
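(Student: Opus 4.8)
The plan is to exhibit $G_k$ as a minor of $F_k$ and then combine the minor-monotonicity of treewidth with \cref{thm:Hk_tw}, which gives $\tw(G_k) = 3k-1$. Concretely, the inequality $\tw(F_k) \ge \tw(G_k) = 3k-1$ follows the moment the minor relation is established, so essentially all the content of the corollary is the minor construction already sketched in the paragraph preceding the statement.

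First I would fix a row index $i$ with $1 \le i \le 2k$ and contract, into a single vertex $\tilde v_i$, all the vertices added when building $F_k$ that are attached (through $v_{i,2k}$) to row $i$: the path $Z_i$, the path $W_i$, the edge $v_{i}z_{i,0}$, and the edge joining $Z_i$ to $W_i$. Since $V(Z_i) \cup V(W_i) \cup \set{v_{i,2k}}$ together with these connecting edges spans a connected subgraph of $F_k$, contracting it is a legitimate sequence of edge contractions and produces a well-defined graph; performing this for every $i$ yields a minor $H$ of $F_k$.

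Next I would check $H \cong G_k$. Inside $Q_k$ the contraction changes nothing, so $\tilde v_i$ inherits exactly the $Q_k$-neighbours of $v_{i,2k}$. The only other edges incident to the contracted set are the edges $w_{i,j}u_{(i-1)(k+1)+j,1}$ for $1 \le j \le k+1$, which become edges $\tilde v_i u_{(i-1)(k+1)+j,1}$; comparing with the definition of $E(G_k)$, these are precisely the edges $\set{v_{i,2k}u_{(i-1)(k+1)+j,1} : 1 \le j \le k+1}$, and $R_k$ is untouched. Hence $H$ is isomorphic to $G_k$, so $G_k$ is a minor of $F_k$, and therefore $\tw(F_k) \ge \tw(G_k) = 3k-1$ by \cref{thm:Hk_tw}.

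I do not expect any real obstacle here: the only thing requiring care is the bookkeeping in the isomorphism check — verifying that the path lengths $\ell(i)$ (chosen for the later outer $k$-planarity argument) are irrelevant because contraction collapses each $Z_i$ regardless of its length, and that the attachment pattern of $W_i$ to $R_k$ matches the index shift $(i-1)(k+1)+j$ appearing in $E(G_k)$. This is routine, and the corollary is an immediate consequence.
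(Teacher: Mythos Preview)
Your proposal is correct and follows exactly the paper's approach: the paper states, in the paragraph immediately preceding the corollary, that contracting each $v_{i,2k}$ together with the vertices of $Z_i$ and $W_i$ exhibits $G_k$ as a minor of $F_k$, and then invokes minor-monotonicity of treewidth together with \cref{thm:Hk_tw}. Your write-up simply fleshes out this one-sentence argument with the explicit isomorphism check.
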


\begin{theorem} \label{thm:Fk_drawing}
    For every $k \geq 1$, The graph $F_k$ has an outer $(2k-1)$-planar drawing.
\end{theorem}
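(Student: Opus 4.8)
The plan is to build an explicit outer $(2k-1)$-planar drawing of $F_k$ by choosing a good cyclic order of the vertices on the circle and then bounding the number of crossings on each edge. The natural idea is to arrange the vertices so that the structure of $F_k$ -- the grid $Q_k$, the $2k$ pendant ``bridges'' consisting of the paths $Z_i$ and $W_i$, and the grid $R_k$ -- is laid out in a way that mirrors the $k$-outerplanar drawing of $G_k$ from \cref{fig:graph_Gk}. Concretely, I would place the rows of $Q_k$ one after another along an arc (reading each row from column $2k$ inward and then the next row back out, snake-like, so that consecutive rows share an endpoint region and the vertical grid edges of $Q_k$ are ``short chords''), then append, for each $i$ from $1$ to $2k$, the path $Z_i$ followed by $W_i$, and finally the rows of $R_k$ in the order dictated by the extended-row incidences, so that the edges $w_{i,j}u_{(i-1)(k+1)+j,1}$ and the grid edges of $R_k$ are nested without deep crossings. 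The key point is that the ``long'' paths $Z_i$ of length $\ell(i)$ are chosen precisely so that they can be routed as long arcs that are only crossed by a controlled number of other edges; the asymmetric definition $\ell(i)=(k-i)(k+1)$ for $i\le k$ and $\ell(i)=(i-k-1)(k+1)$ for $i\ge k+1$ is what balances the two ``halves'' of the drawing so that no edge is overloaded.

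After fixing the cyclic order, the second step is the crossing count, which I would organize by edge type. Grid edges within $Q_k$ and within $R_k$ should cross only boundary-type edges and a bounded number of the ``bridge'' edges; the edges $v_{i,2k}z_{i,0}$, $z_{i,\ell(i)}w_{i,k+1}$ (or $w_{i,1}$), the path edges of $Z_i$ and $W_i$, and the connector edges $w_{i,j}u_{\cdot,1}$ each get their own case. The crux is the chords spanning the ``gap'' between $Q_k$ and $R_k$: in $G_k$ these were the edges $v_{i,2k}u_{(i-1)(k+1)+j,1}$, all incident to the high-degree vertex $v_{i,2k}$, and they would cross too many things; in $F_k$ they have been replaced by the $Z_i$–$W_i$ detours precisely so that each long chord is ``spread out'' into a path whose individual edges are crossed at most $2k-1$ times. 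I would verify that the edge sitting ``in the middle'' of the layout -- intuitively the $Z_i$ with the largest $\ell(i)$, which separates roughly half the grid $Q_k$ (on the order of $k$ columns, each contributing a vertical edge) from the corresponding portion of $R_k$ -- is crossed by at most $2k-1$ edges, and that all other edges are crossed by no more.

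The main obstacle I expect is getting the constant exactly right: $2k-1$ is tight-looking, so the cyclic order must be chosen with care, and one has to track simultaneously (i) how many vertical edges of $Q_k$ a given long $Z_i$-edge must ``jump over,'' (ii) how many of the other paths $Z_{i'}, W_{i'}$ and connectors it crosses, and (iii) the symmetric contribution from the $R_k$ side. This is where the specific values of $\ell(i)$ and the snake ordering of the grid rows earn their keep, and it is the part that requires genuine bookkeeping rather than a slick argument. Once the worst edge is shown to have at most $2k-1$ crossings, all remaining edge types (which are ``more local'' than the central $Z_i$) follow by the same counting with room to spare, completing the proof that $F_k$ is outer $(2k-1)$-planar. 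Combined with \cref{cor:Fk_tw}, this yields outer $(2k-1)$-planar graphs of treewidth at least $3k-1 = 1.5(2k-1)+0.5$, i.e.\ the claimed lower bound for odd values of the planarity parameter.
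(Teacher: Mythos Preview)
Your high-level plan---fix an explicit cyclic order and then bound crossings edge-type by edge-type---is precisely the paper's method, and your remark on how the conclusion feeds into the treewidth lower bound is correct. The entire content of the proof, however, lies in the choice of order, and the layout you sketch does not achieve $2k-1$.

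Here is a concrete failure. With $Q_k$ placed row-by-row and all of the bridges $Z_1,W_1,\ldots,Z_{2k},W_{2k}$ appended afterward, look at the edge $v_{2,2k}z_{2,0}$. In your snake order $v_{2,2k}$ sits at the boundary between rows $2$ and $3$; the arc from $v_{2,2k}$ forward to $z_{2,0}$ contains rows $3,\ldots,2k$ together with the whole $Z_1,W_1$ block. Every column edge $v_{2,j}v_{3,j}$ with $j<2k$ has one endpoint on each side of this chord ($2k-1$ crossings), and so does every edge $v_{i,2k}z_{i,0}$ with $i\neq 2$ (another $2k-1$ crossings), as well as each of the $k+1$ edges $w_{1,j}u_{j,1}$ from $W_1$ to $R_k$. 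Already $v_{2,2k}z_{2,0}$ carries far more than $2k-1$ crossings. The same overload hits most of the $v_{i,2k}z_{i,0}$ edges; no local tweak of a row-by-row layout with sequential bridges repairs this.

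The paper's ordering differs in two essential ways that your proposal does not anticipate. First, $Q_k$ is split into its upper half (rows $1,\ldots,k$) and its lower half (rows $k+1,\ldots,2k$), placed on opposite arcs of the circle, and each half is laid out \emph{column by column}; this makes the vertices $v_{k,2k},\ldots,v_{1,2k}$ consecutive and puts $z_{k,0},\ldots,z_{1,0}$ immediately after them, so that the edges $v_{i,2k}z_{i,0}$ are short. Second, the remaining $Z$-vertices are not laid out path-by-path but are \emph{interleaved} with the $W$-vertices in $k$ ``groups'': the $i$-th group contains all of $W_i$ together with one vertex from each of $Z_1,\ldots,Z_{i-1}$ per step of $W_i$. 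This interleaving is exactly what the lengths $\ell(i)=(k-i)(k+1)$ are tuned for, and it is what keeps every $Z$-edge crossed by at most $2(k-2)$ other $Z$-edges plus at most three $W$-incident edges. Your sketch correctly names the edge types to examine but misses both structural ideas; without them the crossing count does not close at $2k-1$.
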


\begin{proof}
    We describe an outer $(2k-1)$-planar drawing of $F_k$, as depicted in \cref{fig:drawing_Fk}. We call the set of vertices $\{v_{i,j}: 1 \leq i \leq k, 1 \leq j \leq 2k\}$ the \emph{upper part} of $Q_k$. The other vertices of $Q_k$ are called the \emph{lower part} of $Q_k$. We define a cyclic order of the vertices of $F_k$ by arranging them in clockwise order from some selected starting point on a circle. 

    \begin{figure}
        \centering

        \begin{subfigure}[b]{1\textwidth}
        \centering
           \includegraphics[width=0.8\linewidth]{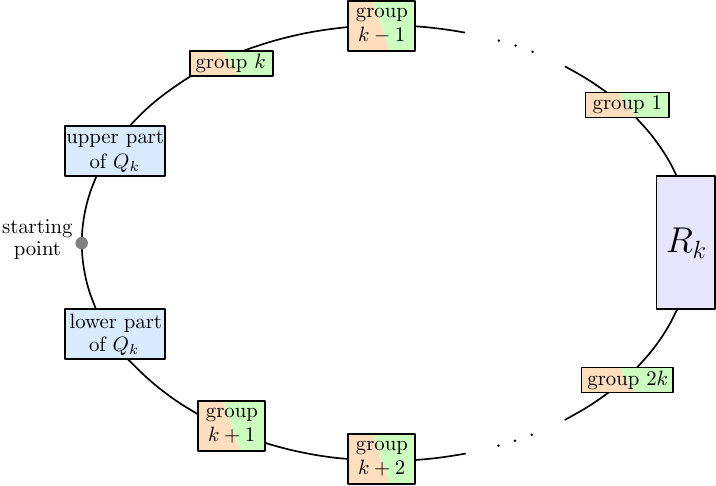}
            \subcaption{Overview of the outer $(2k-1)$-planar drawing of $F_k$. \vskip 6mm}
            \label{fig:Fk_overview}
        \end{subfigure}

        \begin{subfigure}[b]{1\textwidth}
           \includegraphics[width=0.9\linewidth]{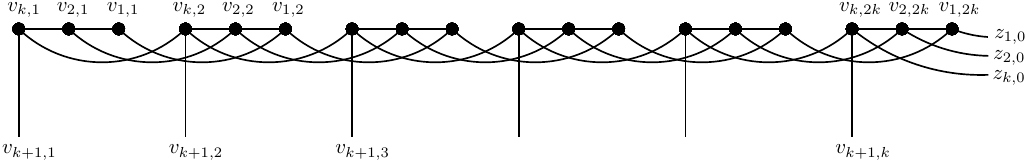}
           \subcaption{Drawing of the upper part of the grid $Q_k$. \vskip 6mm}
           \label{fig:drawing_Qk} 
        \end{subfigure}

        \begin{subfigure}[b]{1\textwidth}
           \includegraphics[width=1\linewidth]{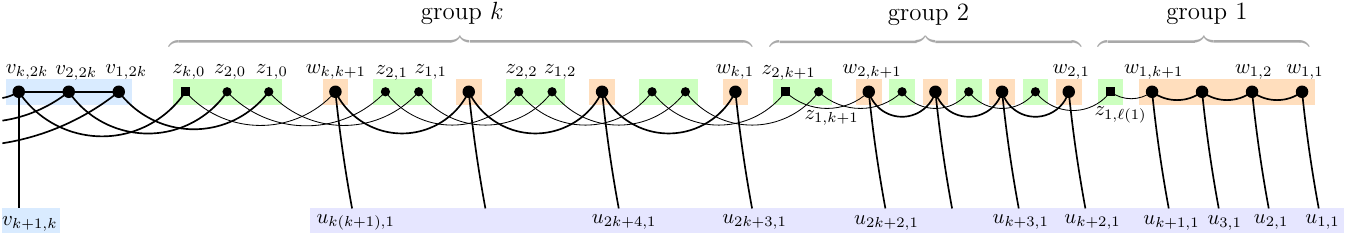}
           \subcaption{Drawing of the groups $k, k-1, \ldots, 1$ connected to the upper part of $Q_k$. \vskip 6mm }
           \label{fig:drawing_groups} 
        \end{subfigure}

        \begin{subfigure}[b]{1\textwidth}
           \includegraphics[width=1\linewidth]{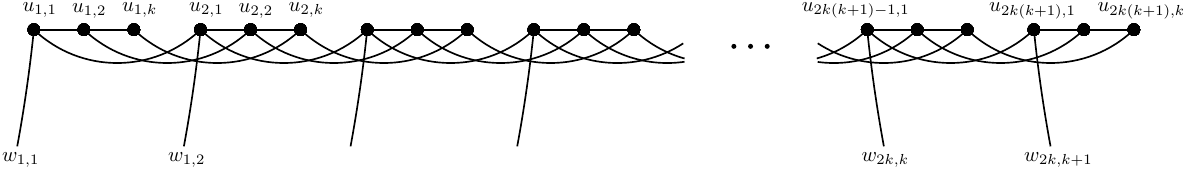}
           \subcaption{Drawing of the grid $R_k$.}
           \label{fig:drawing_Rk} 
        \end{subfigure}
        
        \caption{Fragments of the outer $(2k-1)$-planar drawing of $F_k$, where $k=3$. The drawings are not shown as straight-line drawings, but illustrate the order in which the vertices are placed. A straight-line drawing can be easily obtained by placing vertices in this order on the circle.}
        \label{fig:drawing_Fk}
    \end{figure}
    
    First, we place the vertices of the upper part of $Q_k$ in column-by-column order (see \cref{fig:drawing_Qk}):
    \[ v_{k, 1}, \ldots, v_{2, 1}, v_{1, 1}, \quad v_{k, 2}, \ldots, v_{2, 2}, v_{1, 2}, \quad \ldots \quad v_{k, 2k}, \ldots, v_{2, 2k}, v_{1, 2k}.\]

    We divide the vertices of paths $Z_i$ and $W_i$, for $1 \le i \le k$, into $k$ groups, as follows.
    The $i$-th group contains the vertices of $W_i$ and the vertex $z_{i, \ell(i)}$. If $i \geq 2$, the $i$-th group also contains vertices $z_{a,b}$ for $1 \leq a < i$ and $(k-i)(k+1) \leq b < (k-i+1)(k+1)$. In the drawing, we place the groups of indices $k,k-1,\ldots,1$, in this order. We arrange the vertices in the $i$-th group, for $2 \leq i \leq k$, in the order (see \cref{fig:drawing_groups}):
    \begin{align*}
        & z_{i, \ell(i)}, \\
        & z_{i-1, (k-i)(k+1)}, z_{i-2, (k-i)(k+1)},\ldots, z_{1, (k-i)(k+1)}, w_{i,k+1}, \\
        & z_{i-1, (k-i)(k+1)+1}, z_{i-2, (k-i)(k+1)+1},\ldots, z_{1, (k-i)(k+1)+1}, w_{i,k}, \\ 
        & \vdots \\
        & z_{i-1, (k-i+1)(k+1)-1}, z_{i-2, (k-i+1)(k+1)-1},\ldots, z_{1, (k-i+1)(k+1)-1}, w_{i,1}.
    \end{align*}
    The group of index 1 has vertices arranged in the order: $z_{1, \ell(1)}, w_{1,k+1}, w_{1,k}, \ldots, w_{1,1}$.

    Next, we put the vertices of $R_k$ in row-by-row order (see \cref{fig:drawing_Rk}):
    \[u_{1, 1}, u_{1, 2}, \ldots, u_{1, k}, \quad u_{2, 1}, u_{2, 2}, \ldots, u_{2, k}, \quad \ldots \quad u_{2k(k+1), 1}, u_{2k(k+1), 2}, \ldots, u_{2k(k+1), k}.\]

    The vertices of $F_k$ that are not placed yet are in the lower part of $Q_k$ or in the paths $Z_i$, $W_i$ with $k+1 \le i \le 2k$. We arrange them in counterclockwise order from the starting point and place them between the starting point and the vertices of $R_k$. The order is symmetric, with respect to the starting point, to the one used to arrange the upper part of $Q_k$ and the paths $Z_i$, $W_i$ with $1 \le i \le k$.
    Every vertex $v_{i,j}$, where $k+1 \leq i \leq 2k$ and $1 \leq j \leq 2k$, is placed symmetrically to $v_{2k-i+1, j}$. Vertex $z_{i,j}$, where $k+1 \leq i \leq 2k$ and $0 \leq j \leq \ell(i)$, is placed symmetrically to $z_{2k-i+1, j}$, and vertex $w_{i,j}$, where $k+1 \leq i \leq 2k$ and $1 \leq j \leq k+1$, is placed symmetrically to $w_{2k-i+1, k-j+2}$. The symmetrical drawing of the $i$-th group, for every $1 \leq i \leq k$, forms the group of index $2k-i+1$.

    Now, we partition the edges into several numbered types. For the edges of each type, we show that they cross at most $2k-1$ other edges.
        \vspace{1mm} \newline 
        Edges of $Q_k$:
        \begin{enumerate}
            \item The ``column'' edges of the upper or lower part of $Q_k$, that is, the edges $v_{i,j}v_{i+1,j}$, for $1 \leq i \leq 2k-1, i \neq k$ and $1 \leq j \leq 2k$. They cross no other edges.
            \item The ``column'' edges between the upper part and the lower part of $Q_k$, that is, the edges $v_{k,j}v_{k+1,j}$, for $1 \leq j \leq 2k$. Each of these edges crosses $k-1$ edges of type 3 of the upper part of $Q_k$, and $k-1$ edges of type 3 of the lower part of $Q_k$. The edge $v_{k,1}v_{k+1,1}$ does not cross any edges.   
            \item The ``row'' edges of $Q_k$, that is, the edges $v_{i, j}v_{i,j+1}$, for $1 \leq i \leq 2k$ and $1 \leq j \leq 2k-1$.
            Each of these edges crosses at most $2(k-1)$ edges of type 3 (and type 4, for $j=2k-1$), and additionally at most one edge of type~2.
        \end{enumerate}
        \vspace{1mm} Edges between $Q_k$ and the groups:
        \begin{enumerate}
            \setcounter{enumi}{3}
            \item Each edge $v_{i,2k}z_{i,0}$, for $1 \leq i \leq k$, crosses $2(k-1)$ edges in total: $2(i-1)$ edges of types 3,~4 incident to vertices $v_{j,2k}$, for all $j \in \set{1, \ldots, i-1}$; and $2(k-i)$ edges incident to vertices $z_{j,0}$, for all $j \in \set{i+1, \ldots, k}$. By symmetry, each edge $v_{i,2k}z_{i,0}$, for $k+1 \leq i \leq 2k$, also crosses exactly $2(k-1)$ edges.
        \end{enumerate}
        \vspace{1mm} Edges of the groups:
        \begin{enumerate}
            \setcounter{enumi}{4}
             \item Each edge $z_{i,\ell(i)}w_{i, k+1}$, for $1 \leq i \leq k$, crosses exactly $2(i-1)$ edges incident to the vertices $z_{i-1, \ell(i)}, \ldots, z_{1,\ell(i)}$. By symmetry, each edge $z_{i,\ell(i)}w_{i,1}$, for $k+1 \leq i \leq 2k$, also crosses exactly $2(2k-i)$ edges.
            \item Each edge of the path $Z_i$, for $1 \leq i \leq 2k$, crosses at most $2(k-2)$ edges incident to at most $k-2$ vertices of some paths $Z_j$ with $j \in \set{1, \ldots, 2k}$, and at most three edges incident to some vertex $w_{a,b} \in V(W_j)$ with $j \in \set{1, \ldots, 2k}$.
            \item Each edge of the path $W_i$, for $1 \leq i \leq k$, crosses $2(i-1)$ edges incident to $i-1$ vertices of some paths $Z_j$ with $j \in \set{1, \ldots, k}$. By symmetry, each edge of the path $W_i$, for $k+1 \leq i \leq 2k$, also crosses $2(2k-i)$ edges.
        \end{enumerate}        
        \vspace{1mm} Edges between the groups and $R_k$:
        \begin{enumerate}
            \setcounter{enumi}{7}
            \item Each edge $w_{i,j} u_{(i-1)(k+1)+j,1}$, for $1 \leq i \leq 2k$ and $1 \leq j \leq k+1$, crosses at most $k-1$ edges of some paths $Z_a$ with $a \in \set{1, \ldots, 2k}$, and at most $k-1$ edges of $R_k$ of type 10.
        \end{enumerate}
        \vspace{1mm} Edges of $R_k$:
        \begin{enumerate}
            \setcounter{enumi}{8}
            \item The ``row'' edges of $R_k$, that is, the edges $u_{i,j}u_{i,j+1}$, for $1 \leq i \leq  2k(k+1)$ and $1 \leq j \leq k-1$. They cross no other edges. 
            \item The ``column'' edges of $R_k$, that is, the edges $u_{i,j}u_{i+1,j}$, for $1 \leq i \leq 2k(k+1)-1$ and $1 \leq j \leq k$. Each of these edges crosses at most $2(k-1)$ other edges of this type and at most one edge of type~8.
        \end{enumerate}
\end{proof}

\begin{theorem}
    For every odd positive integer $k$, there exists an outer $k$-planar graph $G$ with $\tw(G) \geq 1.5k+0.5$.
\end{theorem}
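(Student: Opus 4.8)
The plan is to obtain $G$ by instantiating the family $\{F_j\}_{j\ge 1}$ constructed above at the appropriate index rather than building anything new. Since $k$ is odd, I would write $k = 2m-1$ with $m = (k+1)/2$ a positive integer, and set $G = F_m$. The two properties we need are then read off directly from the results already established for $F_m$.

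First I would invoke \cref{thm:Fk_drawing}: the graph $F_m$ admits an outer $(2m-1)$-planar drawing. Because $2m-1 = k$, this drawing is exactly an outer $k$-planar drawing, so $G = F_m$ is an outer $k$-planar graph, as required.

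Next I would invoke \cref{cor:Fk_tw}, which gives $\tw(F_m) \geq 3m - 1$. Substituting $m = (k+1)/2$ yields
\[
\tw(G) = \tw(F_m) \geq 3\cdot\frac{k+1}{2} - 1 = \frac{3k+3-2}{2} = \frac{3k+1}{2} = 1.5k + 0.5,
\]
which is the claimed bound, completing the argument.

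There is no real obstacle left at this stage: all the work was already absorbed into the construction of $F_k$ (the explicit outer $(2k-1)$-planar drawing and the bramble-based lower bound $\tw(G_k)\ge 3k-1$ transferred to $F_k$ via the minor relation $G_k \preceq F_k$). The only point to verify here is the parity bookkeeping — oddness of $k$ is precisely what makes $(k+1)/2$ an integer, so one graph $F_m$ is produced for each odd $k$, and since these graphs have pairwise different vertex counts they form an infinite family, matching the claim in the introduction that the multiplicative constant $1.5$ in the upper bound of~\cite{FirmanGKOW24} cannot be improved.
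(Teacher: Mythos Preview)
Your proposal is correct and matches the paper's own proof essentially verbatim: set $G=F_{(k+1)/2}$, apply \cref{thm:Fk_drawing} to get outer $k$-planarity, and apply \cref{cor:Fk_tw} to get $\tw(G)\ge 3\cdot\frac{k+1}{2}-1=1.5k+0.5$.
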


\begin{proof}
    By \cref{thm:Fk_drawing}, the graph $F_{\frac{k+1}{2}}$ is outer $k$-planar, and by \cref{cor:Fk_tw}, it has treewidth at least $3\cdot \frac{k+1}{2}-1 = 1.5k+0.5$.
\end{proof}

\section{Upper bound on the treewidth of outer min-$k$-planar graphs}
\label{sec:upper}

In this section, we establish an upper bound on the treewidth of outer min-$k$-planar graphs. We improve the previous bound of $3k+1$ presented in~\cite{FirmanGKOW24} to $3\cdot \floor{k/2}+4$. We begin by introducing required notation.

For an outer min-$k$-planar graph $G$ with a given drawing $\Gamma$, we define the \emph{planarization} $G_P$ (with respect to $\Gamma$) as the graph whose vertex set is the union of $V(G)$ and all crossing points of edges of $G$. We say that a vertex $w \in V(G_P)$ \emph{lies} on an edge $uv \in E(G)$ if $w$ is an endpoint of $uv$, or if the crossing point corresponding to $w$ belongs to the segment representing the drawing of $uv$ in $\Gamma$. Graph $G_P$ contains an edge between two vertices if and only if they are consecutive vertices lying on the drawing of some edge of $G$. Observe that $G_P$ is a planar graph. We say that an edge $xy \in E(G_P)$ \emph{lies} on an edge $uv \in E(G)$ if both $x$ and $y$ lie on $uv$ in $\Gamma$. Furthermore, we say that a vertex $v \in V(G_P)$ is \emph{outer} if it is incident to the outer face of $G_P$. Otherwise, $v$ is an \emph{inner} vertex. As we consider only maximal outer min-$k$-planar graphs $G$, the outer vertices of $G_P$ are exactly the vertices of $G$, while the inner vertices of $G_P$ are exactly the crossing points of edges of $G$. By $f_o$ we denote the outer face of $G_P$.

For a planar graph $G$, let $G^*$ denote its dual graph. Let $f^* \in V(G^*)$ denote the vertex dual to the face $f$ of $G$, and let $e^* \in E(G^*)$ denote the edge dual to the edge $e \in E(G)$. We remark that $G^*$ can be drawn on the drawing of $G$ in a way that $f^*$ is on the face $f$ and the drawing of $e^*$ is a curve that crosses the edge $e$ exactly once and passes only through the faces corresponding to the endpoints of $e^*$. 

The following lemma shows a bijection between a spanning tree $T$ of a planar graph $G$ and a spanning tree $T^*$ of $G^*$, where $T^* = \dual(T)$. We also use the notation $\dual(T^*)$ to denote $T$.

\begin{lemma} [Folklore] \label{lem:spanning_dual}
    Let $T$ be a spanning tree of a planar graph $G$. Then $T^*$ with $V(T^*) = V(G^*)$ and $E(T^*) = \set{e^*:e\in E(G) \setminus E(T)}$ is a spanning tree of $G^*$.
\end{lemma}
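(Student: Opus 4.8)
The plan is to verify two things about $T^*$: that it is acyclic, and that it has exactly $\norm{V(G^*)}-1$ edges. A graph on a vertex set of size $n$ that is acyclic with $n-1$ edges is automatically a spanning tree, so these two facts together give the statement. Note first that the existence of a spanning tree forces $G$ to be connected, and recall that $e\mapsto e^*$ is a bijection $E(G)\to E(G^*)$, so $T^*$ is a well-defined subgraph of $G^*$ on the full vertex set $V(G^*)$.

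The edge count is immediate: since $T$ is a spanning tree, $\norm{E(T)}=\norm{V(G)}-1$, hence $\norm{E(T^*)}=\norm{E(G)}-\norm{E(T)}=\norm{E(G)}-\norm{V(G)}+1$. Applying Euler's formula to the connected plane graph $G$, namely $\norm{V(G)}-\norm{E(G)}+\norm{F(G)}=2$ where $F(G)$ is its set of faces, this quantity equals $\norm{F(G)}-1=\norm{V(G^*)}-1$, since the vertices of $G^*$ are exactly the faces of $G$.

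For acyclicity I would argue by contradiction. Suppose $T^*$ contains a cycle $C^*$, consisting of dual edges $e_1^*,\dots,e_\ell^*$ for pairwise distinct primal edges $e_1,\dots,e_\ell$. Drawing $G^*$ on top of $G$ as in the remark preceding the lemma, $C^*$ becomes a simple closed curve $\gamma$ that meets $G$ only in a single interior point of each $e_i$ and in no vertex of $G$. By the Jordan curve theorem $\gamma$ separates the plane into two regions; since each $e_i=a_ib_i$ is crossed by $\gamma$ exactly once, $a_i$ and $b_i$ lie in opposite regions, and in particular both regions contain vertices of $G$. Setting $F=\set{e_1,\dots,e_\ell}$, every walk in $G-F$ is a curve disjoint from $\gamma$ (its edges are not in $F$, hence uncrossed by $\gamma$, and its vertices lie off $\gamma$), so it stays inside one region; thus $a_1$ and $b_1$ lie in different components of $G-F$. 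But $e_i^*\in E(T^*)=\set{e^*:e\in E(G)\setminus E(T)}$ gives $F\cap E(T)=\emptyset$, so $T$ is a connected spanning subgraph of $G-F$ — a contradiction. Hence $T^*$ is acyclic, and combined with the edge count it is a spanning tree of $G^*$.

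The one real obstacle is this last step: converting a cycle of $G^*$ into a genuine disconnection of $G$. It is exactly the classical cycle--bond duality for plane graphs ($F\subseteq E(G)$ is the edge set of a cycle of $G^*$ iff $F$ is a minimal edge cut of $G$), and the Jordan-curve argument above is how I would make it self-contained rather than invoke it as folklore; everything else is bookkeeping with Euler's formula.
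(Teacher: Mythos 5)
Your proof is correct. The paper states this lemma as folklore and gives no proof of its own, so there is nothing to compare against; your argument --- the edge count via Euler's formula, plus acyclicity of $T^*$ via the Jordan-curve form of cycle--cut duality (a dual cycle yields a closed curve crossing only edges outside $T$, which would disconnect $T$) --- is the standard proof of this fact, and both halves are carried out without gaps.
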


The next lemma proves that there exists a spanning tree preserving shortest paths from a given vertex. Such a tree can be found using a breadth-first search.

\begin{lemma} [Folklore] \label{lem:bfs_tree}
    Let $G$ be a graph and let $r$ be a vertex of $G$. Then there exists a spanning tree $T$ of $G$ rooted at $r$ such that $\depth_T(v) = \dist_G(r, v)$ for every vertex $v$ of $G$.
\end{lemma}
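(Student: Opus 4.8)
The plan is to prove \cref{lem:bfs_tree} by the standard breadth-first search construction, organizing the tree by distance layers from the root $r$. First I would partition $V(G)$ into layers $L_d = \set{v : \dist_G(r,v) = d}$ for $d \geq 0$, so that $L_0 = \set{r}$ and, because distances along a path change by exactly one with each edge, every vertex $v \in L_d$ with $d \geq 1$ has a neighbor in $L_{d-1}$ (any shortest $r$--$v$ path has second-to-last vertex at distance $d-1$). The tree $T$ is then built by choosing, for each $v \neq r$, a single \emph{parent} $p(v)$ that is some fixed neighbor of $v$ lying in the layer $L_{\dist_G(r,v)-1}$; the edge set of $T$ is $\set{v\,p(v) : v \in V(G), v \neq r}$. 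I would restrict attention to the connected component of $r$, noting that vertices unreachable from $r$ have infinite distance and can be handled separately (or the statement implicitly assumes $G$ connected, in which case every vertex sits in some finite layer).

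The key steps after the construction are to verify that $T$ is a spanning tree and that depths equal distances. To see $T$ is a tree, I would observe it has $\norm{V(G)} - 1$ edges (one per non-root vertex) and is acyclic: following parent pointers strictly decreases the layer index $\dist_G(r,\cdot)$, so there can be no cycle, and every vertex is connected to $r$ by repeatedly following $p(\cdot)$ until reaching $L_0$. Hence $T$ is connected and acyclic on all of $V(G)$, i.e.\ a spanning tree, and it is naturally rooted at $r$. For the depth claim I would argue by induction on $d = \dist_G(r,v)$: the root satisfies $\depth_T(r) = 0 = \dist_G(r,r)$, and for $v \in L_d$ with $d \geq 1$ the parent $p(v)$ lies in $L_{d-1}$, so by the inductive hypothesis $\depth_T(p(v)) = d-1$, giving $\depth_T(v) = \depth_T(p(v)) + 1 = d = \dist_G(r,v)$.

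The only subtlety, and the closest thing to an obstacle, is confirming that the parent-pointer path in $T$ is genuinely a shortest path, i.e.\ that $\depth_T(v)$ computed in the abstract tree $T$ coincides with the layer index rather than merely being bounded by it. This reduces to the elementary fact that consecutive layers differ by one and that $p(v)$ is chosen from the immediately preceding layer, which the induction above handles cleanly. I would also note the general inequality $\depth_T(v) \geq \dist_G(r,v)$ holds for any spanning tree (a tree path is a walk in $G$), so the construction achieves the minimum possible depth simultaneously for every vertex; the content of the lemma is precisely that a single tree attains this lower bound at all vertices at once, which the layered construction guarantees. No result beyond basic graph-distance properties is needed, so the argument is self-contained.
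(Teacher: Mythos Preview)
Your proposal is correct and follows exactly the approach the paper indicates: the paper states the lemma as folklore without proof, noting only that ``such tree can be found via a BFS algorithm,'' and your layered parent-pointer construction is precisely the standard BFS spanning tree argument. Your handling of the connectedness caveat is appropriate, since the paper applies the lemma only to connected graphs (specifically $G_C^*$).
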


\begin{lemma} \label{lem:outerkpl_dual_dist}
    Let $G$ be an expanded outer min-$k$-planar graph with its planarization $G_P$. Then $\dist(f^*, f_o^*) \leq \floor{k/2}+1$ for every vertex $f^* \in V(G_P^*)$. 
\end{lemma}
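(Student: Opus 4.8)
The plan is to bound the distance in $G_C^*$ from the dual vertex $f_o^*$ of the outer face to any other dual vertex $f^*$ by building a short dual path one face at a time, where each step ``crosses'' one primal edge. A path $f_0^* = f_o^*, f_1^*, \ldots, f_t^* = f^*$ in $G_C^*$ corresponds to a sequence of faces of $G_C$ together with primal edges $e_1^*, \ldots, e_t^*$ of $G_C$ separating consecutive faces; each such $e_i$ of $G_C$ lies on some edge of $G$ in the drawing $\Gamma$. So it suffices to exhibit, for an arbitrary face $f$ of $G_C$, a curve from the outer face $f_o$ to $f$ inside the disk that crosses at most $\floor{0.5k}+1$ edges of $G$ (crossing each such edge at most once), since then it crosses at most $\floor{0.5k}+1$ edges of $G_C$ and we get a dual path of that length.

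First I would fix a face $f$ of $G_C$ and pick an interior point $p$ of $f$, together with a point $q$ outside the circle; draw the straight segment $pq$ (perturbing it generically so it avoids all crossing points of $\Gamma$ and all vertices of $G$, and passes through each edge of $G$ at most once). This segment meets some set $S$ of edges of $G$. If $|S| \le \floor{0.5k}+1$ we are done for this $f$. Otherwise, among the edges in $S$, consider the one, say $e$, whose crossing point with $pq$ is closest to $p$; equivalently, $e$ is the ``innermost'' edge of $S$ in the sense that the sub-segment of $pq$ from $p$ to $e$ crosses no other edge of $S$. Now the key structural fact I want to use is that in an outer min-$k$-planar drawing, any edge $e$ that is crossed by more than $k$ other edges is itself a ``light'' edge only for the crossings where its partner is light — but more useful here is: the edge $e$ together with the chord it spans partitions the remaining vertices into two arcs, and I would like to reroute $pq$ around the ``small'' side of $e$ so as to reach $f$ while crossing fewer edges overall. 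Concretely, replace the portion of $pq$ beyond $e$ by a detour that hugs the drawing of $e$ on one side until it can re-enter toward $f$; one of the two sides of $e$ is crossed by at most $\floor{0.5k}$ edges among those that also cross $e$... and the outer min-$k$ condition is what controls this, because along $e$ the crossing edges come in two groups (one per side of the segment $pq \cap e$), and the min-condition forces at least one group to be small.

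The main obstacle, and where I would spend the most care, is making the ``reroute around the small side'' argument precise and ensuring it terminates with the claimed bound $\floor{0.5k}+1$ rather than something like $k+1$. The clean way is probably an induction on the number of edges of $G$ separating $p$ from $f_o$ along a chosen optimal curve: take a shortest curve $\gamma$ from $f_o$ to $f$ (minimizing the number of primal-edge crossings), let $e$ be the last edge of $G$ that $\gamma$ crosses before reaching $f$, and look at the crossings on $e$. Since $\gamma$ is crossing-minimal, $e$ must be crossed by at least $\floor{0.5k}+1$ edges ``shielding'' $f$ on $e$'s far side — but then by the min-$k$-planarity applied to the crossings on $e$, at most $k$ of those are with edges crossed $\le k$ times, and a counting/parity argument on the two sides of $e$ relative to where $f$ sits yields that $\gamma$ restricted to the $f$-side of $e$ already has length $\ge \floor{0.5k}+1$, closing the induction. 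I would also need to handle the base case (faces adjacent to $f_o$, distance $\le 1$) and check that ``outer'' vs ``inner'' vertices of $G_C$ and the maximality/expandedness hypotheses are used only where needed (expandedness gives $\Delta(G)\le 3$, which keeps the local picture around each vertex of $G_C$ simple and bounds how many edges of $G$ emanate near a given dual step). The honest summary is: the reduction to ``a short curve to $f$'' is routine duality; the heart is the extremal argument that min-$k$-planarity halves the naive bound, and I expect to phrase it via a crossing-minimal curve plus a two-sided counting argument on the innermost obstructing edge.
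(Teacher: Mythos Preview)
Your proposal has the right target—build a short dual path by crossing few edges of $G$—but the step where you invoke min-$k$-planarity does not work as stated, and you misidentify where the expandedness hypothesis enters. The min-$k$ condition is a statement about \emph{crossing points}: at each crossing, at least one of the two participating edges has $\le k$ crossings in total. It says nothing about an individual edge in isolation. When you take the innermost edge $e$ hit by the segment $pq$ and split the crossings on $e$ into two groups by where $pq$ meets $e$, your claim that ``the min-condition forces at least one group to be small'' is false: $e$ may be heavy (have arbitrarily many crossings), and those crossings may sit anywhere along $e$. In that situation min-$k$-planarity only tells you that every edge crossing $e$ is light, which does not directly let you reroute to $f$. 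The crossing-minimal-curve variant has the same defect, since the last obstructing edge need not be light either.

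The paper's argument sidesteps this by starting not from an edge bounding $f$ but from a \emph{crossing vertex} $v$ on $\partial f$, and this is exactly where expandedness is used: because $\Delta(G)\le 3$, every outer vertex of $G_C$ is incident only to inner faces that already touch $f_o$; hence if $f$ is not adjacent to $f_o$, every vertex of $G_C$ on $\partial f$ is inner, i.e.\ a crossing point. Now apply min-$k$-planarity \emph{at} $v$: one of the two edges of $G$ through $v$, say $e$, is light. List the vertices on $e$ as $v_0,\dots,v_{s+t+1}$ with $v_s=v$ and $v_{s+1}$ the neighbor on $e$ also adjacent to $f$; wlog $s\le t$, so $s\le\lfloor k/2\rfloor$. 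Walking along the $f$-side of $e$ from $v_s$ toward the endpoint $v_0$ and crossing the $s+1$ transverse edges $v_sw_s,\dots,v_0w_0$ (the last lying on the outer cycle) gives a dual path from $f^*$ to $f_o^*$ of length $s+1\le\lfloor k/2\rfloor+1$. No induction or rerouting is needed; once you pick the right starting object (a crossing point rather than an obstructing edge), the proof is two lines.
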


\begin{proof}
    Let $f$ be an inner face of $G_P$. If $f$ is adjacent to $f_o$, then $\dist(f^*, f^*_o) = 1$. Otherwise, let $v$ be a vertex of $G_P$ incident to $f$. Since $G$ is expanded, the vertex $v$ is inner, and hence it lies on an edge $e$ of $G$ that crosses at most $k$ other edges. Let $v_0, v_1, \ldots, v_s, v_{s+1}, \ldots, v_{s+t+1}$ be all vertices lying on $e$, listed in the order along $e$, where $v_s = v$ and $v_{s+1}$ is a neighbor of $v$ that is incident to $f$. We may assume that $s\leq t$, i.e.~$v_s$ is closer to an endpoint of the edge $e$ than $v_{s+1}$ to the other endpoint of $e$. Note that at most $k+2$ vertices lie on $e$ (two endpoints and at most $k$ crossing points), so $s+t+2 \leq k+2$. Together with the previous inequality, this implies $s \leq k/2$. The number $s$ is an integer so $s \leq \floor{k/2}$.

     We inductively define a sequence $w_s, w_{s-1}, \ldots, w_0$ of vertices. Vertex $w_s$ is the neighbor of $v_s$ that is incident to $f$ and does not lie on $e$. For every $i = s-1,\ldots,0$, the vertex $w_i$ is one of the two neighbors of $v_i$ not lying on $e$, chosen so that $w_{i}$, $v_{i}$, $v_{i-1}$ and $w_{i-1}$ are incident to the same face of $G_P$. Let $e_i$ denote the edge $v_iw_i$. Note that the path formed by the edges $e_s^*, \ldots, e_0^*$ connects $f^*$ with $f_o^*$ (since $e_0$ is incident to $f_o$). Hence, $\dist(f^*, f_o^*) \leq s+1 \leq \floor{k/2}+1$.
\end{proof}

Let $v$ be an inner vertex of $G_P$. Since we forbid common crossing points of three edges of $G$, the vertex $v$ has four neighbors, which we denote by $w_1, w_2, w_3, w_4$ in clockwise order. The \emph{splitting} of the vertex $v$ replaces it with two vertices $v_1$ and $v_2$ connected by an edge, and adds edges $v_1w_1$, $v_1w_2$ and $v_2w_3$, $v_2w_4$. We fix a planar embedding of the new graph by placing $v_1$, $v_2$ very close to where $v$ was drawn. We say that vertices $v_1$ and $v_2$ \emph{lie} on the same edges as the vertex $v$. Moreover, we call the edge $v_1v_2$ an \emph{auxiliary} edge. After splitting~$v$, every edge of $G_P$ has an edge corresponding to it, and every face of $G_P$ corresponds to a new one in a natural way. Additionally, the dual graph has one new edge, which is dual to~$v_1v_2$.

Let $G_S$ denote the \emph{split planarization} of the outer min-$k$-planar graph $G$, that is, the graph $G_P$ with all inner vertices split. See \cref{fig:spanning-trees} for an example. Observe that there is a one-to-one correspondence between $V(G_P^*)$ and $V(G_S^*)$. Further, every edge of $G_P^*$ has a corresponding edge of $G_S^*$. The following lemma shows how the properties of a spanning tree $T_P$ of $G_P$ and a spanning tree $\dual(T_P)$ of the dual graph $G_P^*$ are preserved after splitting the vertices of $G_P$.

\begin{lemma} \label{lem:good_st}
    Let $G$ be an expanded outer min-$k$-planar graph with its split planarization~$G_S$. Then there exists a spanning tree $T_S$ of $G_S$ and a spanning tree $T_S^* = \dual(T_S)$ of $G_S^*$ rooted at $f_o^*$, such that $\depth(f^*) \leq \floor{k/2}+1$ for every vertex $f^* \in V(G_S^*)$ and $E(T_S)$ contains all auxiliary edges of $G_S$.
\end{lemma}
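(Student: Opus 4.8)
The plan is to build the desired pair of trees ``from the dual side'': first construct a shallow spanning tree of $G_C^*$, then transport it to $G_S^*$, and finally dualize. By \cref{lem:outerkpl_dual_dist} we have $\dist_{G_C^*}(f_o^*, f^*) \le \floor{0.5k}+1$ for every $f^* \in V(G_C^*)$, so \cref{lem:bfs_tree} applied to $G_C^*$ with root $f_o^*$ yields a spanning tree $T_C^*$ of $G_C^*$ with $\depth_{T_C^*}(f^*) = \dist_{G_C^*}(f_o^*, f^*) \le \floor{0.5k}+1$ for all $f^*$.

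Next I would transport $T_C^*$ into $G_S^*$. Recall from the construction of $G_S$ that subdividing a degree-$4$ inner vertex of $G_C$ creates no new face, so there is a bijection $V(G_C^*)\to V(G_S^*)$, every edge of $G_C^*$ has a corresponding edge of $G_S^*$, and the only edges of $G_S^*$ with no counterpart in $G_C^*$ are the duals of the auxiliary edges. Let $T_S^*$ be the subgraph of $G_S^*$ consisting of the images of the vertices and edges of $T_C^*$ under these correspondences. Since we copied a spanning tree edge for edge without introducing any new vertices, $T_S^*$ is connected, acyclic, and spans $V(G_S^*)$, hence it is a spanning tree of $G_S^*$; we root it at $f_o^*$. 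Moreover $\depth_{T_S^*}(f^*)=\depth_{T_C^*}(f^*)\le\floor{0.5k}+1$ for every $f^*$, and $T_S^*$ contains no dual of an auxiliary edge.

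Finally I would set $T_S = \dual(T_S^*)$. By \cref{lem:spanning_dual} this is a spanning tree of $G_S$, and $E(T_S)=\set{e \in E(G_S): e^*\notin E(T_S^*)}$. Because no auxiliary edge of $G_S$ has its dual in $T_S^*$, every auxiliary edge belongs to $T_S$. Thus $T_S$ together with $T_S^*=\dual(T_S)$ has all the required properties, and the lemma follows.

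The only point that needs a little care is the claim that copying $T_C^*$ into $G_S^*$ still yields a spanning tree, and this rests on the observation recorded just before the lemma: each subdivision of a degree-$4$ inner vertex adds exactly one edge to the dual graph — the dual of the new auxiliary edge — and no new dual vertex, so $G_S^*$ is obtained from $G_C^*$ merely by adding the (redundant) auxiliary-dual edges. Granting this, the proof is just the composition of \cref{lem:outerkpl_dual_dist}, \cref{lem:bfs_tree} and \cref{lem:spanning_dual}, the key idea being that forcing the auxiliary edges into $T_S$ is the same as keeping their duals out of $T_S^*$, which a BFS tree of $G_C^*$ does automatically.
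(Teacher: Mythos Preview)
Your proof is correct and follows essentially the same approach as the paper: build a BFS tree $T_C^*$ of $G_C^*$ rooted at $f_o^*$ using \cref{lem:bfs_tree} and \cref{lem:outerkpl_dual_dist}, transport it to $G_S^*$ via the face/edge correspondence, and set $T_S=\dual(T_S^*)$; the auxiliary edges land in $T_S$ precisely because their duals are absent from $T_C^*$. If anything, you supply more justification than the paper does for why the transported tree is still a spanning tree of $G_S^*$.
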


\begin{proof}
    Let $G_P$ be a planarization of $G$. By Lemmas \ref{lem:bfs_tree} and \ref{lem:outerkpl_dual_dist}, there exists a spanning tree $T_P^*$ of $G_P^*$ that is rooted at $f_o^*$ whose vertices have depth at most $\floor{k/2}+1$. Let $T_P = \dual(T_P^*)$ denote the spanning tree of $G_P$.

    Let $G_S$ denote the graph obtained from $G_P$ by splitting each of its inner vertices. After this transformation, let $T_S^*$ be a tree constructed of edges corresponding to those of $T_P^*$. Clearly, $T_S^*$ is a spanning tree of the graph $G_S^*$. The spanning tree $T_S = \dual(T_S^*)$ of $G_S$ contains all auxiliary edges of $G_S$, because none of the duals of auxiliary edges are in $E(T_S^*)$, as they were not in $E(T_P^*)$.
\end{proof}

Now, we are ready to prove the main result of this section.

\begin{theorem} \label{thm:minkpl_upper}
    Let $G$ be an outer min-$k$-planar graph. Then $\tw(G) \leq 3 \cdot \floor{k/2} + 4$.
\end{theorem}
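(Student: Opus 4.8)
The plan is to turn the dual spanning tree produced by \cref{lem:good_st} into a tree decomposition of $G$ itself, using $T_S^*$ as the index tree and letting each bag be the set of vertices of $G$ that one ``meets'' along the at most $\floor{0.5k}+1$ edges separating a face of $G_S$ from the outer face.

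First, the reductions. Since $\tw$ does not increase under taking minors and does not decrease when edges are added, by \cref{obs:max_okpl}, \cref{obs:expand_okpl} and the remark following them it suffices to prove the bound when $G$ is \emph{expanded} outer min-$k$-planar; fix such a $G$. Form its subdivided crossing graph $G_S$ -- a subcubic plane graph whose outer face $f_o$ is bounded by a cycle -- and apply \cref{lem:good_st} to obtain a spanning tree $T_S$ of $G_S$, the dual spanning tree $T_S^* = \dual(T_S)$ rooted at $f_o^*$ with $\depth(f^*)\le \floor{0.5k}+1$ for every face $f$, and with $E(T_S)$ containing every auxiliary edge of $G_S$. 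Recall that $G_C$, hence $G_S$, records $G$ faithfully: every edge $\gamma$ of $G$ is drawn as a path of $G_S$ and is cut by the crossings on it into segments, each of which is a non-auxiliary edge of $G_S$ lying on the unique edge $\gamma$ of $G$.

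Now build a tree decomposition of $G$ with index tree $T_S^*$ (rooted at $f_o^*$). For a face $f$ of $G_S$ let $f^* = g_0^*, g_1^*, \ldots, g_d^* = f_o^*$ be the path to the root, where $d=\depth(f^*)\le\floor{0.5k}+1$, and let $e_1,\ldots,e_d$ be the non-tree edges of $G_S$ dual to the edges $g_{i-1}^*g_i^*$ of this path. By \cref{lem:good_st} none of the $e_i$ is auxiliary, so each $e_i$ is a segment of a well-defined edge $\gamma_i = a_ib_i$ of $G$; put all the vertices $a_i,b_i$ into the bag $B_{f^*}$, together with a bounded-in-number set of further vertices of $G$ associated with $f$ (read off from the fundamental cycle $C_{e_1}$ of $e_1$ -- the innermost cycle surrounding $f$ -- and the edges of $G$ incident to it). Here one uses that the fundamental cycles $C_{e_1},C_{e_2},\ldots,C_{e_d}$ are nested in the plane, $C_{e_i}$ enclosing exactly the faces in the subtree of $g_{i-1}^*$ in $T_S^*$. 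From this laminar structure the two tree-decomposition axioms should follow: for every $v\in V(G)$ the set of faces $f$ with $v\in B_{f^*}$ induces a connected subtree of $T_S^*$, and for every edge $ab$ of $G$ the bag of a face lying just on the inner side of one of the segments of $ab$ contains both $a$ and $b$. So $(T_S^*,\set{B_{f^*}})$ is a tree decomposition of $G$.

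It remains to bound the bag size, which is the delicate point. The vertices $a_i,b_i$ number at most $2d\le 2\floor{0.5k}+2$; the additional vertices attached to $f$ must be controlled using that $G$ is \emph{subcubic} -- this is exactly why we reduced to an expanded graph -- so that only a bounded number of edges of $G$ can be relevant at $f$, bringing the total to at most $3\floor{0.5k}+5$. Hence the decomposition has width at most $3\floor{0.5k}+4$, and $\tw(G)\le 3\floor{0.5k}+4$. I expect the main obstacle to be precisely this accounting: defining the ``extra'' vertices of each bag so that the subtree axiom and the edge-covering axiom both hold while keeping $\norm{B_{f^*}}\le 3\floor{0.5k}+5$ rather than a weaker $O(k)$ bound -- the subcubicity of the expanded graph and the property that $T_S$ contains all auxiliary edges are what should make this possible, and verifying the axioms from the nesting of the fundamental cycles will require care about how the segments of a single edge of $G$ interleave with the cycles $C_{e_i}$.
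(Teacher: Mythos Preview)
Your proposal takes a genuinely different route from the paper and, as written, has a real gap.

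\textbf{How the paper proceeds.} The paper does \emph{not} use $T_S^*$ as the index tree. It uses the \emph{primal} spanning tree $T_S$: bags are indexed by vertices of $G_S$. The crucial device you are missing is an \emph{orientation} of the edges of $G$ (outer-cycle edges clockwise, the rest arbitrarily), so that every vertex of the expanded graph has at most two incoming edges. Then for each vertex $w$ of $G_S$ the bag $B_w$ receives, for every non-outer face $f$ adjacent to $w$ and every dual-tree edge $e^*$ on the $f^*$--$f_o^*$ path, only the \emph{tail} $x$ of the oriented edge $(x,y)$ of $G$ on which $e$ lies. An inner vertex of $G_S$ has degree $3$, hence three adjacent faces and three dual paths, contributing $3(\floor{0.5k}+1)$ vertices; rules handling the two edges of $G$ through $w$ add $2$ more, giving $\norm{B_w}\le 3\floor{0.5k}+5$. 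The orientation is exactly what halves the contribution per dual-path edge from $2$ to $1$ and makes the $3$-versus-$2$ accounting come out right.

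\textbf{Where your plan breaks.} You index by faces and put \emph{both} endpoints $a_i,b_i$ into $B_{f^*}$ from a single dual path. That accounts for at most $2(\floor{0.5k}+1)$ vertices, but then the whole proof rests on the undefined ``extra'' vertices. Two concrete obstructions show these extras cannot be hand-waved:
\begin{itemize}
\item \emph{Edge axiom.} If every segment of an edge $uv\in E(G)$ happens to lie in $T_S$ (nothing forbids this), then no dual-tree edge on any $f^*$--$f_o^*$ path is a segment of $uv$, and your basic rule never places $u$ and $v$ in a common bag.
\item \emph{Subtree axiom.} For a vertex $u\in V(G)$, the faces whose dual path crosses some segment of an edge incident to $u$ form a union of subtrees of $T_S^*$ (one per non-$T_S$ segment). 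There is no reason these subtrees meet in $T_S^*$.
\end{itemize}
Any fix must add, for each face $f$, vertices of $G$ read off from the boundary of $f$; but faces of $G_S$ can have arbitrarily long boundaries and be incident to many distinct edges of $G$, so ``bounded-in-number'' is not automatic. You correctly identify this accounting as the main obstacle, but you have not supplied it, and it is precisely here that the paper's primal-tree-plus-orientation construction does the work.
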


\begin{proof}
    By \cref{obs:expand_okpl} we may assume that $G$ is an expanded outer min-$k$-planar graph. Let $G_S$ be the split planarization of $G$. By \cref{lem:good_st} there exists a spanning tree $T_S$ of $G_S$ and a spanning tree $T_S^* = \dual(T_S)$ of $G_S^*$ rooted at $f_o^*$ such that $\depth(f^*) \leq \floor{k/2}+1$ for every vertex $f^* \in V(G_S^*)$ and $E(T_S)$ contains all auxiliary edges of $G_S$.

    \begin{figure}
        \begin{subfigure}[b]{0.47\textwidth}
            \centering
            \includegraphics[width=0.8\linewidth]{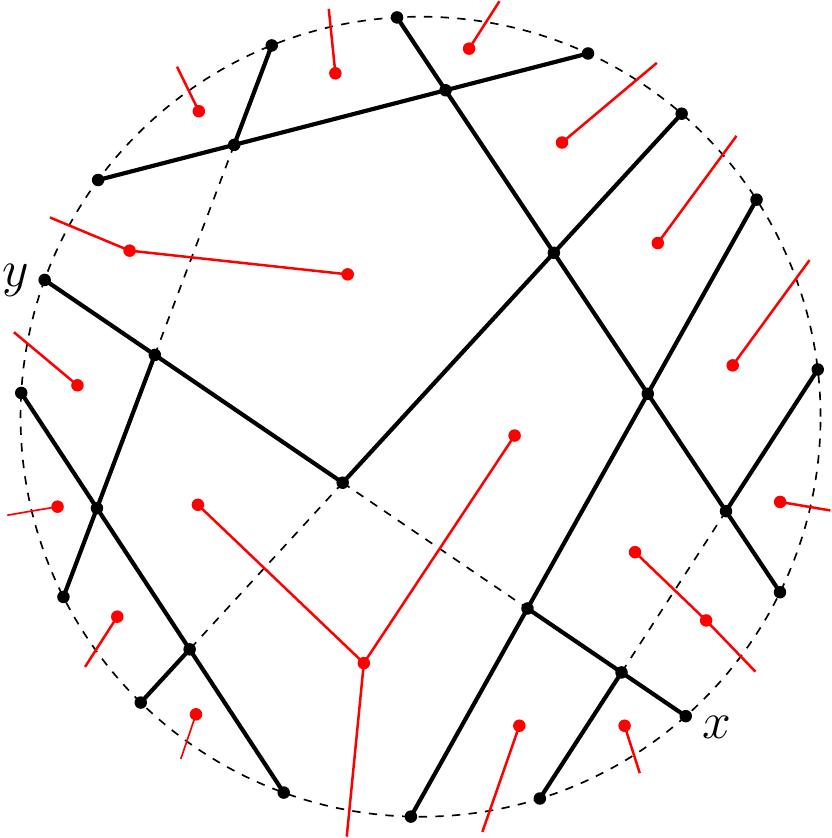}
            \subcaption[]{The graph $G_P$ with its spanning tree $T_P$ colored black and~a~spanning tree $T_P^*$ of $G_P^*$ in red.}
        \end{subfigure}
        \begin{subfigure}[b]{0.06\textwidth}
        \end{subfigure}
        \begin{subfigure}[b]{0.47\textwidth}
            \centering
            \includegraphics[width=0.8\linewidth]{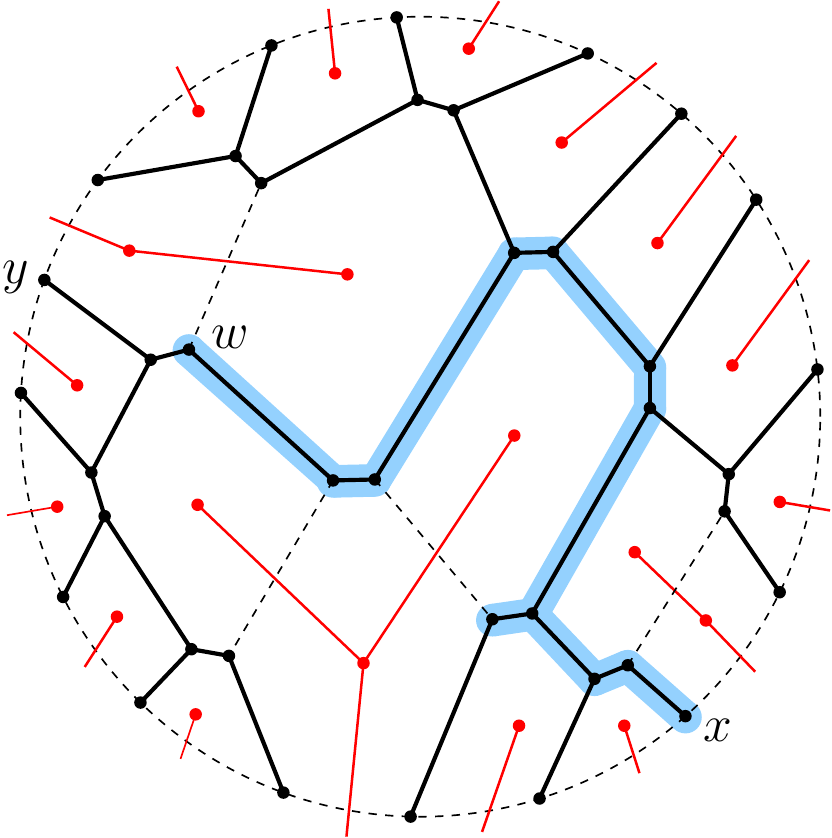}
            \captionsetup{textformat=simple}
            \subcaption[]{The graph $G_S$ with its spanning tree $T_S$ colored black and~a~spanning tree $T_S^*$ of $G_S^*$ in red.}
            \label{fig:spanning-trees-b}
        \end{subfigure}
        \caption{Drawings of an example graphs with their spanning trees and spanning trees of the dual graphs. An example walk, constructed in the proof of \cref{thm:minkpl_upper}, that is connecting $w$ and $x$, is marked in blue. The vertex $f_o^*$ is missing in both drawings.}
        \label{fig:spanning-trees}
    \end{figure}

    We orient the edges of $G$ as follows. Edges incident to the outer face are oriented clockwise, while all other edges are oriented arbitrarily. Observe that every vertex of $G$ has at most two incoming edges. 
    
    Now, we construct the tree decomposition $\mathcal{T} = (T_S, B)$ of the graph $G$. The bags of $\mathcal{T}$ are indexed by the vertices of the spanning tree $T_S$. We place vertices of $G$ into bags using the following rules.
    \begin{enumerate}
        \item For every outer vertex $v \in V(G_S)$, we place $v$ into the bag $B_v$.
        \item For every oriented edge $(x, y)$ of $G$, we place $x$ into the bag $B_y$.
        \item For every oriented edge $(x, y)$ of $G$, and for every inner vertex $v \in V(G_S)$ lying on $(x, y)$, we place $x$ into the bag~$B_v$.
        \item For every inner face $f$ and every edge $e^*$ on the path from $f^*$ to $f_o^*$ in $T_S^*$, where $e$ lies on the edge $(x, y)$ of $G$, we place $x$ into the bag $B_v$ for each vertex $v$ incident to~$f$.
    \end{enumerate}
    Note that, in rule 4, the edge $e$ is not in $E(T_S)$, so it is not an auxiliary edge, which implies that it is lying only on a single edge of $G$.
    
    For every edge $(x, y)$ of $G$, by rules 1 and 2, the bag $B_y$ contains both $x$ and $y$. Moreover, every vertex of $G$ is present in some bag. So, to prove that $\mathcal{T}$ is a tree decomposition of $G$, it suffices to show that for every vertex $x \in V(G)$, the set $\{w: x \in B_w\}$ induces a connected subtree in $T_S$. 

    Fix a vertex $x$ and an edge $(x,y)$ of $G$. Let $e = uv$ be an edge lying on $(x,y)$ such that $e \not\in E(T_S)$. Assume that $e^* = f_1^*f_2^*$ and $\depth_{T_S^*}(f_2^*) = \depth_{T_S^*}(f_1^*)+1$. Let $T_e^*$ be a subtree of $T_S^*$ induced by the set of all descendants of $f_2^*$, including $f_2^*$. Define $F_e = \set{f: f^* \in V(T_e^*)}$ and let $\boundary(F_e)$ be the set of vertices incident to some face in $F_e$. Observe that, by rule~4, $x$ is placed into all bags indexed by $\boundary(F_e)$. The set $\boundary(F_e)$ induces a connected subgraph of $T_S$, i.e.\ $T_S[\boundary(F_e)]$ is connected, containing both $u$ and $v$.
    Note that the bags of $\mathcal{T}$, into which we placed $x$ by rule 4, are exactly the bags of vertices of $T_S$ that are in $\boundary(F_e)$ for some edge $e\not\in E(T_S)$ lying on some edge $(x,y)$ of $G$.    

    By rules 1, 2 and 3, $x$ is contained in all bags $B_w$ such that $w$ lies on $(x,y)$ for some edge $(x,y)$ of $G$. We claim that the vertices indexing these bags, together with the vertices indexing bags we placed $x$ into by rule 4, form a connected subgraph of $T_S$. To see that, we show that for every vertex $w$ with $x \in B_w$, $w$ is connected to $x$ by a walk in $T_S$ such that bags indexed by the vertices of this walk contain $x$. 
    
    If $w$ lies on an edge $(x,y)$ of $G$ then, in order to construct this walk, we start at vertex $w$. We iterate over consecutive edges lying on $(x,y)$ between $w$ and $x$, starting at the edge incident to $w$. If the current edge $e$ is in $E(T_S)$, then we extend the walk by $e$. Otherwise, $e \not\in E(T_S)$.
    As $T_S[\boundary(F_e)]$ is connected and every bag of a vertex in $\boundary(F_e)$ contains $x$, we can extend the walk by some path in $T_S[\boundary(F_e)]$ connecting the endpoints of $e$.
    See \cref{fig:spanning-trees-b} for an example of such a walk.
    
    If $w$ is in the set $\boundary(F_e)$ for some edge $e$ lying on $(x,y)$, then we begin the walk with a path contained in $\boundary(F_e)$ between $w$ and an endpoint $v$ of $e$. We extend this walk by a walk between $v$ and $x$, whose existence we have already proven.

    Next, we bound the size of the bags in $\mathcal{T}$. Consider an inner vertex $v$ of $T_S$. It lies on exactly two edges of $G$, so by rule 3 we placed two vertices into $B_v$. Also, $v$ is incident to three non-outer faces of $G_S$. For every such face $f$ and every edge $e^*$ on the path from $f^*$ to $f_o^*$ in $T_S^*$, by rule 4 we placed one vertex into $B_v$. By \cref{lem:good_st} we have $\depth(f^*) \leq \floor{k/2}+1$, so each such path has at most $\floor{k/2}+1$ edges. Thus, $\norm{B_v} \leq 2 + 3 \cdot \left(\floor{k/2}+1\right)$. Now, consider an outer vertex $v$ of $T_S$. By rules 1 and 2, the bag $B_v$ contains $v$ and at most two other endpoints of edges incoming to $v$ in $G$. Also, $v$ is incident to two non-outer faces of~$G_S$. Hence, we derive a bound $\norm{B_v} \leq 3 + 2 \cdot \left(\floor{k/2}+1\right)$. The width of the constructed tree decomposition is at most \[\max\bigl\{2 + 3 \cdot \left(\floor{k/2}+1\right), 3 + 2 \cdot \left(\floor{k/2}+1\right)\bigr\}-1 = 2 + 3 \cdot \left(\floor{k/2}+1\right) -1= 3 \cdot \floor{k/2} + 4.\]

\end{proof}

\section{The separation number of outer min-$k$-planar graphs}
\label{sec:separators}

The inequality $\sn(G) \leq \tw(G)+1$, that bounds the separation number, holds for every graph~$G$. We remark that \cref{thm:minkpl_upper} directly implies that $\sn(G) \leq 3\cdot \floor{k/2}+5$ for every outer min-$k$-planar graph $G$.
By carefully choosing some bag $B_x$ of a tree decomposition, we can construct a balanced separation $(C,D)$ satisfying $C\cap D = B_x$.
To establish a better upper bound, we first prove a general lemma showing how we can obtain a balanced separation $(C,D)$ such that $C\cap D = B_x \cap B_y$ for some two neighboring vertices $x,y$ of a tree decomposition, which needs to satisfy some additional properties.

\begin{lemma} \label{lem:td_sep}
    Let $\mathcal{T} = (T, B)$ be a tree decomposition of a graph $G$. Assume that $\Delta(T) \leq 3$ and that every vertex $v \in V(G)$ is in at least two bags of $\mathcal{T}$. Let $a$ be an integer such that $\norm{B_x \cap B_y} \leq a$ for every edge $xy \in E(T)$. Then $G$ has a balanced separation of order at most~$a$.
\end{lemma}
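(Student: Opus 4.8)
The plan is to use a classic "centroid edge" argument on the tree $T$, turning a balanced split of the bags of $\mathcal{T}$ into a balanced separation of $G$. Since $\Delta(T)\le 3$, removing any edge $xy\in E(T)$ splits $T$ into two subtrees $T_x\ni x$ and $T_y\ni y$; I will choose this edge so that neither side "owns" too many vertices of $G$, then let $C$ be the union of the bags in $T_x$ and $D$ be the union of the bags in $T_y$. By the standard properties of tree decompositions, $C\cup D=V(G)$, there is no edge of $G$ between $C\setminus D$ and $D\setminus C$ (any such edge would have both endpoints in a common bag, which lies entirely on one side), and $C\cap D\subseteq B_x\cap B_y$, so the order is at most $a$. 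The only remaining work is to make the separation balanced.

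**For balance**, I would assign each vertex $v\in V(G)$ a "home": since $\{t: v\in B_t\}$ is a nonempty subtree of $T$, I can pick for each $v$ a representative node, and I want to distribute the $\norm{V(G)}$ vertices over $V(T)$ so that I can find a good centroid-type cut. The cleanest route: root $T$ arbitrarily, and for a directed edge $(x,y)$ (with $y$ the parent, say) define $\mu(x,y)$ = number of vertices $v\in V(G)$ such that the subtree $\{t: v\in B_t\}$ is entirely contained in the component of $T-xy$ containing $x$; equivalently, the vertices whose every occurrence is "below" the edge. Walk down the tree always moving toward the child $x$ maximizing $\mu(x,\cdot)$ among the (at most two, by $\Delta(T)\le3$) children; this is the usual search for a centroidal edge. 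The hypothesis that every vertex lies in \emph{at least two} bags is what guarantees the process terminates at an edge $xy$ with both sides controlled: a vertex can "disappear" from the count only when its subtree straddles the cut, and each vertex's subtree has at least one edge, so it straddles some cut; a careful accounting shows one can stop at an edge where the number of vertices confined strictly below is at most $\frac{2}{3}\norm{V(G)}$ and symmetrically the number confined strictly above (i.e.\ not appearing below) is also at most $\frac{2}{3}\norm{V(G)}$. Then $\norm{C\setminus D}$ is bounded by the vertices appearing only on the $x$-side, which is at most the "below" count, and similarly for $\norm{D\setminus C}$, giving a balanced separation.

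**The main obstacle** is handling the vertices whose bag-subtree straddles the chosen edge $xy$: these lie in $B_x\cap B_y$, hence in both $C$ and $D$, so they do \emph{not} burden either $C\setminus D$ or $D\setminus C$ — but I must make sure the centroid search can always be stopped at an edge where the strictly-one-sided vertex counts on \emph{both} sides are at most $\frac23\norm{V(G)}$. Without the "every vertex in $\ge 2$ bags" assumption this can fail (a vertex in a single leaf bag contributes fully to one side no matter where you cut near it, and a star $T$ could force an imbalance); with it, every vertex's subtree has an edge, so by a discharging/monovariant argument along the search path, the first edge at which the deeper side's confined-count drops to at most $\frac23 n$ is also an edge where the shallower side's confined-count is at most $\frac23 n$, because the decrease between consecutive edges along the path is gradual in the relevant sense. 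I would formalize this by considering, along the root-to-centroid path, the monotone sequence of "below" counts and picking the last edge where it exceeds $\frac13 n$ (or a similar threshold), then arguing that crossing to the next edge both sides are fine.

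**In summary**, the steps in order are: (1) fix the correspondence between edges of $T$ and bipartitions of $V(G)$ via $T-xy$, and verify $C\cup D=V(G)$, no cross edges, and $C\cap D\subseteq B_x\cap B_y$; (2) define for each $v$ the set of edges whose removal leaves $v$'s bag-subtree on one side, and note by the $\ge2$-bags hypothesis this set is nonempty; (3) run the centroidal edge search in $T$ using $\Delta(T)\le 3$ to keep branching under control; (4) show the stopping edge $xy$ satisfies $\norm{C\setminus D}\le\frac23\norm{V(G)}$ and $\norm{D\setminus C}\le\frac23\norm{V(G)}$; (5) conclude $(C,D)$ is a balanced separation of order $\le \norm{B_x\cap B_y}\le a$.
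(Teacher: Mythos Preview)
Your overall architecture matches the paper's: for an edge $xy\in E(T)$, set $C=\bigcup_{t\in T_x}B_t$ and $D=\bigcup_{t\in T_y}B_t$, check that $(C,D)$ is a separation with $C\cap D=B_x\cap B_y$, and then search for an edge making it balanced. Steps (1)--(3) and (5) are fine. The gap is step~(4): you assert that ``a careful accounting shows'' the walk stops at a balanced edge, and propose to ``pick the last edge where [the below-count] exceeds $\tfrac13 n$'', but you never carry this out, and as stated it does not obviously work. Knowing that the below-count at the stopping edge is at most $\tfrac23 n$ does not by itself bound the \emph{above}-count: the above side contains the parent component \emph{plus} the sibling subtrees, and you have no control on $|S_{c',v}|$ for the siblings $c'$ (only on $\mu(c',v)=|S_{c',v}\setminus S_{v,c'}|$). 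Your sketch invokes both hypotheses ($\Delta(T)\le 3$ and ``$\ge 2$ bags'') but does not show how they combine to close this.

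The paper fills exactly this gap with a short contradiction argument that you should substitute for your threshold search. Orient every edge $xy$ toward the side with $|S_{\cdot}\setminus S_{\cdot}|>\tfrac23 n$; if some edge is unoriented, it is balanced and you are done. Otherwise $T$ has a sink $x$ with neighbours $y_1,\dots,y_d$, $d\le 3$. Each incident edge points to $x$, so $|S_{y_i,x}|<\tfrac13 n$. The ``$\ge 2$ bags'' hypothesis now does real work: every $v\in V(G)$ lies in some bag other than $B_x$, hence in some $S_{y_i,x}$, so $\bigcup_i S_{y_i,x}=V(G)$. Then $n\le\sum_i|S_{y_i,x}|<d\cdot\tfrac13 n\le n$, a contradiction. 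This is precisely the node your walk would end at, so the cleanest fix is to replace the monovariant/threshold hand-wave with this three-line sink argument.
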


\begin{proof}
    For every edge $xy \in E(T)$, after removing it from $T$, we obtain two connected components $C_x$ and $C_y$ of $T$ such that $x \in V(C_x)$ and $y \in V(C_y)$. We define $S_{x,y} = \bigcup_{v \in V(C_x)} B_v$ and $S_{y,x} = \bigcup_{v \in V(C_y)} B_v$. It is a well known fact that the pair $(S_{x,y}, S_{y,x})$ is a separation of $G$ of order $\norm{S_{x,y} \cap S_{y,x}} = \norm{B_x \cap B_y} \leq a$.

    We claim that there exists an edge $xy \in E(T)$ such that $(S_{x,y}, S_{y,x})$ is a balanced separation of $G$. Suppose the contrary. Then for every $xy \in E(T)$ it holds that $\norm{S_{x,y} \setminus S_{y,x}} > \frac{2}{3}n$ or $\norm{S_{y,x} \setminus S_{x,y}} > \frac{2}{3}n$, where $n = \norm{V(G)}$. Now, we orient every edge of $T$. If the first inequality holds then we orient $xy$ as $(y,x)$, in the other case as $(x, y)$. Also, note that  $\norm{S_{x,y} \setminus S_{y,x}} > \frac{2}{3}n$ implies $\norm{S_{y,x}} < \frac{1}{3}n$.
    
    The tree $T$ with oriented edges is an acyclic graph, so there exists a sink, that is, a vertex in $T$ such that all edges incident to $x$ are oriented towards $x$. Let $\set{y_1, \ldots, y_d}$, where $d \leq 3$, be the set of neighbors of $x$ in $T$. We have $\norm{S_{y_i, x}} < \frac{1}{3}n$. Moreover, \[\bigcup_{1 \leq i \leq d} S_{y_i, x} = \bigcup_{v \in V(T)\setminus \{x\}} B_v = V(G),\] since every vertex of $G$ is in at least two bags of $\mathcal{T}$. We obtain the following inequalities
    \[ \norm{V(G)} = \norm{\bigcup_{1 \leq i \leq d} S_{y_i, x}} \leq \sum_{1 \leq i \leq d} \norm{S_{y_i, x}} < d \cdot \frac{1}{3}n \leq n, \]
    which is a contradiction.
\end{proof}

Now, we are ready to establish an upper bound on the separation number of outer min-$k$-planar graphs.

\begin{theorem} \label{thm:sn_upper}
    Let $G$ be an outer min-$k$-planar graph. Then $\sn(G) \leq 2\cdot \floor{k/2} +4$.
\end{theorem}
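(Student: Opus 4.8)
The plan is to deduce \cref{thm:sn_upper} from \cref{lem:td_sep} applied to the tree decomposition $\mathcal{T}=(T_S,B)$ constructed in the proof of \cref{thm:minkpl_upper}. First a reduction: the statement ``$\sn(G)\le 2\cdot\floor{0.5k}+4$ for every outer min-$k$-planar $G$'' is equivalent to ``every outer min-$k$-planar graph admits a balanced separation of order at most $2\cdot\floor{0.5k}+4$'', since every subgraph of an outer min-$k$-planar graph is again outer min-$k$-planar. To produce such a separation I would, as in \cref{sec:upper}, pass to a maximal and then expanded graph: a separation of $G+e$ is a separation of $G$ with the same vertex sets, so extending $G$ to a maximal graph in its drawing is harmless, and replacing high-degree vertices by paths as in \cref{obs:expand_okpl} is also harmless here — this is the one point that needs a short extra argument, because the separation number is not obviously minor-monotone, but the expansion only contracts pairwise disjoint paths lying consecutively on the circle, and projecting a balanced separation back along such contractions can only lower its order and can only help its balance.

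So let $G$ be expanded and let $G_S$, $T_S$, $T_S^*$ and $\mathcal{T}=(T_S,B)$ be as in the proof of \cref{thm:minkpl_upper}. I would now verify the hypotheses of \cref{lem:td_sep}. Since $G$ is expanded, every vertex of the crossing graph has degree at most $3$ once the inner (degree-$4$) vertices are subdivided, so $\Delta(G_S)\le 3$ and therefore $\Delta(T_S)\le 3$, as $T_S$ is a subgraph of $G_S$. Every vertex $v\in V(G)$ is incident to its clockwise edge $(v,w)$ of the outer cycle, so by Rules~1 and~2 we have $v\in B_v$ and $v\in B_w$ with $v\neq w$, hence $v$ lies in at least two bags. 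It therefore remains to prove the key estimate $\norm{B_x\cap B_y}\le 2\cdot\floor{0.5k}+4$ for every edge $xy\in E(T_S)$; granting this, \cref{lem:td_sep} yields a balanced separation of $G$ of order at most $2\cdot\floor{0.5k}+4$.

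For the key estimate I would argue as follows. The contribution of Rule~4 to a bag $B_v$ through a face $f$ of $G_S$ incident to $v$ depends only on $f$: it is the set, call it $\Phi(f)$, of tails of the $G$-edges carrying the edges of $T_S^*$ on the path from $f^*$ to $f_o^*$, and $\norm{\Phi(f)}\le\floor{0.5k}+1$ because that path has at most $\floor{0.5k}+1$ edges. Fix $xy\in E(T_S)$ and let $F,F'$ be the two faces of $G_S$ incident to the edge $xy$; both are incident to $x$ and to $y$, so $\Phi(F)\cup\Phi(F')\subseteq B_x\cap B_y$. Tracking Rules~1--4, a vertex of $G$ lying in $B_x\cap B_y$ lies in $\Phi(F)\cup\Phi(F')$, or is a tail of one of the (at most two) $G$-edges carrying $xy$, or is obtained from Rule~4 through the third face at $x$ and simultaneously through the third face at $y$. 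The two quantitative claims are: $\norm{\Phi(F)\cup\Phi(F')}\le 2\cdot\floor{0.5k}+2$, because the non-tree edge $(xy)^*$ closes a fundamental cycle of $T_S^*$ of length at most $2\cdot\floor{0.5k}+3$, so the $T_S^*$-path between $F^*$ and $F'^*$ has at most $2\cdot\floor{0.5k}+2$ edges; and the remaining contributions add at most $2$, because $x$ and $y$ lie on the same $G$-edge(s) carrying $xy$ and because the third faces of $x$ and of $y$ are separated in $G_S^*$ by that fundamental cycle, so any edge of $T_S^*$ on both of their root-paths already lies on the $F^*$--$F'^*$ path and hence already contributes to $\Phi(F)\cup\Phi(F')$. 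Summing gives $\norm{B_x\cap B_y}\le 2\cdot\floor{0.5k}+4$, which finishes the proof.

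The step I expect to be the main obstacle is exactly this last bound on the ``private'' part of $B_x\cap B_y$: a priori the third face at $x$ and the third face at $y$ could have dual-tree paths to $f_o^*$ sharing a long initial segment, which would put many common vertices into $B_x$ and $B_y$ and break the estimate. Making precise that the fundamental cycle of $(xy)^*$ genuinely separates these two faces in $G_S^*$ — equivalently, that the lowest common ancestor in $T_S^*$ of the two private faces lies on the $F^*$--$F'^*$ path — is the crux, and it forces one to examine how the BFS tree $T_S^*$ threads through the local arrangement at the crossing encoded by $x$ and $y$, possibly requiring the BFS tree from \cref{lem:good_st} to be chosen with a little extra care. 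A secondary, more routine, obstacle is the verification that the reduction to expanded graphs does not increase the order of a smallest balanced separation.
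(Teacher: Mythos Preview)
Your high-level plan---apply \cref{lem:td_sep} to the tree decomposition built in \cref{thm:minkpl_upper} after checking $\Delta(T_S)\le 3$, that each vertex lies in two bags, and that $\norm{B_x\cap B_y}\le 2\floor{0.5k}+4$---is exactly the paper's. The execution diverges in two places, and in the first of them your ``routine'' step is actually wrong. Contracting the image-paths does lower the order of a separation, but it does \emph{not} in general preserve balance: a single vertex of $G$ of large degree contributes many images to $H$, so a separation that is balanced in $H$ because those images all sit on one side can become arbitrarily lopsided in $G$ after contraction (the inequality $\sum_{v\in A'\setminus B'} s_v\le\frac{2}{3}\sum_v s_v$ does not imply $\norm{A'\setminus B'}\le\frac{2}{3}\norm{V(G)}$ when the $s_v$ are unequal). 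The paper sidesteps this by projecting the \emph{tree decomposition} rather than the separation: it sets $B_x=\{\org(u):u\in B'_x\}$, checks that this is a valid tree decomposition of the original $G$ on the same tree $T_S$, and only then applies \cref{lem:td_sep}, directly to $G$.

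This change makes the edge-bag bound more delicate, not less: a vertex $v\in B_x\cap B_y$ may arise because one image of $v$ lands in $B'_x$ and a \emph{different} image lands in $B'_y$, so your trichotomy (in $\Phi(F)\cup\Phi(F')$; tail of a $G$-edge carrying $xy$; Rule~4 via the two ``third'' faces simultaneously) no longer covers all cases---and even for the expanded graph it omits the mixed Rule~1--3/Rule~4 combinations. The paper proves the corresponding claim not via your fundamental-cycle separation but by an explicit Jordan-curve argument: from the witnesses $v_{i_x}\in B'_x$ and $v_{i_y}\in B'_y$ it assembles a closed curve $\mathcal{C}$ out of the edge $xy$, the outer arc spanned by the images of $v$, pieces of the $G$-edges $(v_{i_x},w_x)$ and $(v_{i_y},w_y)$, and portions of the dual-tree paths to $f_o^*$, and then shows that the root-path of $f_1^*$ (or $f_2^*$) in $T_S^*$ must cross $\mathcal{C}$, forcing Rule~4 via $f_1$ or $f_2$ to apply to some image of $v$. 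Your instinct that the fundamental cycle of $(xy)^*$ separates the third faces is in the same spirit and is plausibly correct for the purely expanded setting, but your justification conflates the $F^*$--$F'^*$ tree path with the union of the two root-paths, and the multi-image phenomenon introduced by the $\org$ map still has to be dealt with.
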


\begin{proof}
    The class of outer min-$k$-planar graphs is closed under taking subgraphs. Therefore, it suffices to find a balanced separation of order at most $2\cdot \floor{k/2} +4$ for every maximal outer min-$k$-planar graph $G$. Let $H$ be an expanded outer min-$k$-planar graph obtained from $G$ by \cref{obs:expand_okpl}. By \cref{thm:minkpl_upper}, there exists a tree decomposition $\mathcal{T} = (T_S, B)$ of $H$, where $T_S$ is a spanning tree of the split planarization of $H$. From the proof of \cref{thm:minkpl_upper}, it follows that $\Delta(T_S) \leq 3$ and that every vertex $v \in V(H)$ is in at least two bags of $\mathcal{T}$ (since there is an oriented edge $(v, w)$ in $H$, so $v \in B_v$ and $v \in B_w$). 
    
    We construct a tree decomposition $\mathcal{T}' = (T_S, B')$ of $G$ with $B'_x = \set{\org(v): v \in B_x}$, where $\org(v)$ denotes the original vertex that $v$ replaced in the transformation described in \cref{obs:expand_okpl}. Every vertex $v \in V(G)$ is in at least two bags of $\mathcal{T}'$, since every image of $v$ is in at least two bags of $\mathcal{T}$. Every edge $vw \in E(G)$ is realized in some bag of $\mathcal{T}'$, because in $H$ there is an edge corresponding to $vw$ between an image of $v$ and an image of $w$. 
    To prove that, for every vertex $v$ of $G$, the bags of $\mathcal{T}'$ containing $v$ are spanning a connected subtree of $T_S$, we denote the images of $v$ by $v_1, \ldots, v_s$, ordered along the outer face. Since $H$ is maximal, for every $i \in \set{1, \ldots, s-1}$, there is an edge $v_iv_{i+1}$ in $E(H)$. Thus, the two subtrees of $T_S$ induced by the bags of $\mathcal{T}$ containing $v_i$ and those containing $v_{i+1}$ share a common vertex. Bags containing $v$ in $\mathcal{T}'$ are spanning a connected subtree of $T_S$, because this subtree is a union of subtrees spanned by the images of $v$. Hence, $\mathcal{T}'$ is a valid tree decomposition of $G$.

     We say that a vertex $v$ was placed into a bag $B_x$ of $\mathcal{T}$ due to rule 4 of constructing the tree decomposition being applicable to the vertex $v$ and a face $f$ if:
     \begin{itemize}
         \item the face $f$ is incident to $x$;
         \item there exists an edge $e^*$ on the path between $f_o^*$ and $f^*$ in $T_S^*$ such that $e$ lies on an edge $(v,w)$ of $H$, for some $w \in V(H)$.
     \end{itemize}
     Now, we want to show that, for every edge $xy \in E(T_S)$, we have $\norm{B_x' \cap B_y'} \leq 2\cdot \floor{k/2} +4$. Let $f_1$ and $f_2$ be the faces of $H_S$ incident to $xy$.
     
     \begin{claim*}
        If $v \in B_x' \cap B_y'$ then there exists an image $v_t$ of $v$ such that either
        \begin{itemize}
            \item $xy$ lies on an edge $(v_t, w)$ of $H$, for some $w \in V(H)$; or
            \item $v_t$ was placed into both $B_x$ and $B_y$ due to rule 4 of constructing $\mathcal{T}$ being applicable to $v_t$ and face $f_1$ or face $f_2$.
        \end{itemize}
    \end{claim*}

    \begin{claimproof}
        If $x$ has degree 3 in $H_S$, then let $f_x \not\in \set{f_1, f_2}$ be the face of $H_S$ incident to vertex~$x$. Similarly, if $y$ has degree 3 in $H_S$, then let $f_y \not\in \set{f_1, f_2}$ be the face of $H_S$ incident to $y$.
        Assume that $v \in B_x' \cap B_y'$, but no image of $v$ was placed into $B_x$ and $B_y$ due to the reasons stated in the claim. So there exist $v_{i_x}$ and $v_{i_y}$ that are, not necessarily distinct, images of $v$ such that $v_{i_x} \in B_x$ and $v_{i_y} \in B_y$. For $t \in \set{x, y}$, vertex $v_{i_t}$ was placed into $B_t$ because either
        \begin{enumerate}
            \item $t$ lies on an edge $(v_{i_t}, w_t)$ of $H$ such that $xy$ does not lie on $(v_{i_t}, w_t)$; or
            \item when $t$ has degree 3 in $H_S$, rule 4 of constructing $\mathcal{T}$ is applicable to the vertex $v_{i_t}$ and face $f_t$, i.e.~there exists an edge $(v_{i_t}, w_t)$ of $H$ and an edge $e_t$ of $H_S$ lying on $(v_{i_t}, w_t)$ such that $e_t^*$ is on the path between $f_t^*$ and $f_o^*$ in $T_S^*$. 
        \end{enumerate}

        Now, we draw a curve $\mathcal{C}$ on the drawing of $H_S$. The curve $\mathcal{C}$ consists of the drawing of the edge $xy$ and the drawing of an arc of the outer face between $v_{i_x}$ and $v_{i_y}$ that contains only images of $v$ (the images of $v$ are spanning a single arc of the outer face). Next, we add to $\mathcal{C}$ curves connecting $v_{i_x}$ with $x$ and $v_{i_y}$ with $y$. For $t \in \set{x, y}$, the way we draw these curves depends on the reason $v_{i_t}$ was placed into $B_t$, considered in the same order as above.
        \begin{enumerate}
            \item If $t$ lies on an edge $(v_{i_t}, w_t)$ of $H$, we draw along $(v_{i_t}, w_t)$, starting at vertex $v_{i_t}$ and ending at vertex $t$.
            \item If rule 4 of constructing $\mathcal{T}$ is applicable to the vertex $v_{i_t}$ and face $f_t$, let $p_t$ be a path between $f_t^*$ and $f_o^*$ in $T_S^*$. We draw along $(v_{i_t}, w_t)$, starting at vertex $v_{i_t}$ and ending at the crossing point with the drawing of $p_t$. We continue along $p_t$ till vertex $f_t^*$. Finally, we connect vertices $f_t^*$ and $t$ with a segment.
        \end{enumerate}
    
        \begin{figure}
            \centering
            \includegraphics[width=0.6\linewidth]{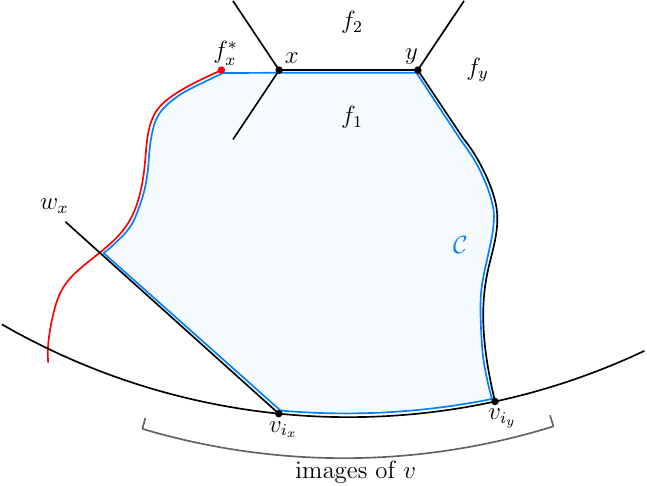}
            \caption{Drawing of an example curve $\mathcal{C}$.}
            \label{fig:curve}
        \end{figure}
    
        Note that if $x = v_{i_x}$ and $y = v_{i_y}$, then $\mathcal{C}$ is degenerated to the arc between $v_{i_x}$ and $v_{i_y}$, implying that $xy$ connects two consecutive images of $v$ -- contradiction. Otherwise, we claim that one of the closed regions induced by $\mathcal{C}$ contains $f_1$ or $f_2$. Indeed, $\mathcal{C}$ follows edges of $H_S$ and edges of $T_S^*$, but cannot contain $f_1^*$ nor $f_2^*$, because then rule 4 of constructing $\mathcal{T}$ would be applicable to vertex $v_{i_x}$ or $v_{i_y}$ and face $f_1$ or $f_2$. The segments between $f_t^*$ and $t$ does not intersect $f_1$ nor $f_2$.
        We may assume that $f_1$ is contained inside a closed region induced by~$\mathcal{C}$. Consider a path $p_1$ between $f_1^*$ and $f_o^*$ in $T_S^*$. Since $f_1^*$ is inside $\mathcal{C}$ and $f_o^*$ is outside $\mathcal{C}$, drawing of $p_1$ has to intersect $\mathcal{C}$. We consider where the first intersection point is located.
        \begin{itemize}
            \item Path $p_1$ cannot intersect $e$ nor the segments between $f_t^*$ and $t$.
            \item If $p_1$ intersects an edge $(v_{i_t}, w_t)$ then rule 4 of constructing $\mathcal{T}$ is applicable to $v_{i_t}$ and $f_1$.
            \item If $p_1$ intersects $p_t$, then $p_1$ follows along $p_t$ up to the intersection point with $(v_{i_t}, w_t)$, so the previous case applies.
            \item If $p_1$ intersects the arc of the outer face between $v_{i_x}$ and $v_{i_y}$ then it has to intersect an edge $(v_r, v_{r+1})$, where $v_r$ and $v_{r+1}$ are consecutive images of $v$ on the outer face. Hence, rule 4 of constructing $\mathcal{T}$ is applicable to $v_r$ and $f_1$.
        \end{itemize}
        In all cases, we obtain a contradiction.
    \end{claimproof}

    We proved that if $v \in B_x' \cap B_y'$ then there is an image $v_t$ of $v$ such that either
    \begin{itemize}
        \item $xy$ lies on an edge $(v_t, w)$ of $H$, for some $w \in V(H)$; or
        \item $v_t$ was placed into both $B_x$ and $B_y$ due to rule 4 of constructing the tree decomposition $\mathcal{T}$ being applicable to vertex $v_t$ and face $f_1$ or $f_2$.
    \end{itemize}
      Note that $xy$ lies on at most two edges of $H$ (two if $xy$ is an auxiliary edge, one otherwise). Moreover, the two paths from $f_o^*$ to $f_1^*$ and from $f_o^*$ to $f_2^*$ in $T_S^*$ each have at most $\floor{k/2}+1$ edges. Therefore, $\norm{B_x' \cap B_y'} \leq 2 + 2\cdot (\floor{k/2}+1) = 2\cdot \floor{k/2} +4$. By applying \cref{lem:td_sep} to $\mathcal{T}'$, we obtain that $G$ has a balanced separation of order at most $2\cdot \floor{k/2} +4$.
\end{proof}

To give a lower bound we define a graph called \emph{stacked prism}. A stacked prism $Y_{m, n}$ is an $m \times n$ grid with additional edges connecting the vertices of the first and the last row that are in the same column. The $Y_{m, n}$ has an outer $(2n-2)$-planar drawing, thus also an outer min-$(2n-2)$-planar drawing. In the cyclic order of the drawing, we place rows consecutively, one after another. The edges from rows cross no other edges and the edges from columns cross exactly $2n-2$ other edges.
The authors of~\cite{FirmanGKOW24} showed that for every number $n$ and for every sufficiently large even number $m$, $\sn(Y_{m, n}) = 2n$.
This leads to the following theorem.

\begin{theorem} \label{thm:sn_lower}
    For every even number $k$, there exists an outer min-$k$-planar graph $G$ such that $\sn(G) = k+2$.
\end{theorem}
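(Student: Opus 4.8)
The plan is to realise $G$ as a stacked prism whose parameters are tuned so that the convex drawing described just above the statement certifies outer $k$-planarity. Since $k$ is even, put $n = 0.5k + 1$, a positive integer with $n \ge 2$. Then $2n - 2 = k$, so the outer $(2n-2)$-planar drawing of $Y_{m,n}$ exhibited above is an outer $k$-planar drawing; in particular $Y_{m,n}$ is outer $k$-planar for every $m$. Every outer $k$-planar drawing is also an outer min-$k$-planar drawing, since in each crossing both of the two crossing edges are crossed at most $k$ times, so a fortiori at least one of them is. Hence $Y_{m,n}$ is outer min-$k$-planar for every $m$.

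It then remains to pin down $m$ so that the separation number hits the target. By the result of~\cite{FirmanGKOW24}, for this fixed $n$ there is an even number $m$ — indeed every sufficiently large even $m$ — with $\sn(Y_{m,n}) = 2n$. Fix such an $m$ and set $G = Y_{m,n}$. Then $G$ is outer min-$k$-planar and $\sn(G) = 2n = 2(0.5k+1) = k+2$, which is exactly the claim.

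I expect essentially no obstacle here: the theorem is a direct consequence of the parameter identity $2n-2 = k$ together with the two facts about $Y_{m,n}$ that are imported just before the statement (the drawing, and the value $\sn(Y_{m,n}) = 2n$). The only points needing care are trivial — that $n$ is a genuine positive integer, which uses that $k$ is even and positive, and that the cited theorem indeed supplies a suitable even $m$. If one wanted a self-contained argument, the nontrivial ingredient would be the lower bound $\sn(Y_{m,n}) \ge 2n$ for large even $m$, which comes from the cylindrical grid structure of $Y_{m,n}$ (each of its $n$ columns is a cycle, and these cycles are stacked); a large-treewidth certificate in the spirit of the bramble used in the proof of \cref{thm:Hk_tw} gives the matching lower bound for $\tw$, and the sharper bound $\sn(Y_{m,n}) = 2n$ is precisely what we quote from~\cite{FirmanGKOW24}.
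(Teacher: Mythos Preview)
Your proposal is correct and mirrors the paper's approach exactly: the paper does not give a separate proof but simply records the theorem as a direct consequence of the preceding paragraph, namely the outer $(2n-2)$-planar drawing of $Y_{m,n}$ together with the imported fact $\sn(Y_{m,n})=2n$ from~\cite{FirmanGKOW24}, applied with $n=\tfrac{k}{2}+1$. Your write-up just makes this inference explicit.
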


We remark that the multiplicative constant of 1 in the upper bound given in \cref{thm:sn_upper} is tight, as it matches that of the lower bound in \cref{thm:sn_lower}.

\bibliography{biblio.bib}

@misc{kammer2009lower,
  author    = {Frank Kammer and Torsten Tholey},
  title     = {A lower bound for the treewidth of k-outerplanar graphs},
  institution = {Fakult{\"a}t f{\"u}r Angewandte Informatik},
  series    = {Reports / Technische Berichte der Fakult{\"a}t f{\"u}r Angewandte Informatik der Universit{\"a}t Augsburg},
  year      = {2009},
  url       = {https://opus.bibliothek.uni-augsburg.de/opus4/frontdoor/index/index/docId/1304}
}

@article{seymour1993graph,
    title = {Graph Searching and a Min-Max Theorem for Tree-Width},
    journal = {Journal of Combinatorial Theory, Series B},
    volume = {58},
    number = {1},
    pages = {22-33},
    year = {1993},
    issn = {0095-8956},
    doi = {https://doi.org/10.1006/jctb.1993.1027},
    author = {Paul D. Seymour and Robin Thomas},
}

@article{2LayerkPlanar,
  author    = {Angelini, Patrizio and Da Lozzo, Giordano and F{\"o}rster, Henry and Schneck, Thomas},
  title     = {2-Layer $k$-Planar Graphs: Density, Crossing Lemma, Relationships and Pathwidth},
  year      = {2024},
  journal   = {Computer Journal},
  volume    = {67},
  number    = {3},
  pages     = {1005--1016},
  doi       = {10.1093/comjnl/bxad038},
  timestamp = {Sat, 06 Jul 2024 18:16:41 +0200}
}

@inproceedings{be-cm1p2p-GD14,
  author    = {Michael J. Bannister and David Eppstein},
  title     = {Crossing Minimization for 1-page and 2-page Drawings of Graphs with Bounded Treewidth},
  year      = {2014},
  editor    = {Christian A. Duncan and Antonios Symvonis},
  booktitle = {22nd International Symposium on Graph Drawing (GD 2014)},
  series    = {Lecture Notes in Computer Science},
  volume    = {8871},
  pages     = {210--221},
  doi       = {10.1007/978-3-662-45803-7_18},
  timestamp = {Sun, 13 Nov 2022 00:00:00 +0100},
  publisher = {Springer}
}

@article{be-cm1p2pdgbtw-JGAA18,
  author    = {Bannister, Michael J. and Eppstein, David},
  title     = {Crossing Minimization for 1-page and 2-page Drawings of Graphs with Bounded Treewidth},
  year      = {2018},
  journal   = {Journal of Graph Algorithms and Applications},
  volume    = {22},
  number    = {4},
  pages     = {577--606},
  doi       = {10.7155/jgaa.00479},
  timestamp = {Tue, 16 Feb 2021 15:56:58 +0100}
}

@article{Bodlaender1998,
  author    = {Hans L. Bodlaender},
  title     = {A Partial $k$-Arboretum of Graphs with Bounded Treewidth},
  year      = {1998},
  journal   = {Theoretical Computer Science},
  volume    = {209},
  number    = {1-2},
  pages     = {1--45},
  doi       = {10.1016/S0304-3975(97)00228-4},
  timestamp = {Wed, 17 Feb 2021 22:00:44 +0100}
}

@inproceedings{BeyondOuterplanarity,
  author    = {Chaplick, Steven and Kryven, Myroslav and Liotta, Giuseppe and L{\"o}ffler, Andre and Wolff, Alexander},
  title     = {Beyond Outerplanarity},
  year      = {2017},
  editor    = {Frati, Fabrizio and Ma, Kwan-Liu},
  booktitle = {25th International Symposium on Graph Drawing and Network Visualization (GD 2017)},
  pages     = {546--559},
  series    = {Lecture Notes in Computer Science},
  volume    = {10692},
  doi       = {10.1007/978-3-319-73915-1_42},
  timestamp = {Sun, 13 Nov 2022 00:00:00 +0100},
  publisher = {Springer}
}

@article{courcelle1990,
  author    = {Bruno Courcelle},
  title     = {The monadic second-order logic of graphs. {I.} {R}ecognizable sets of finite graphs},
  year      = {1990},
  journal   = {Information and Computation},
  volume    = {85},
  number    = {1},
  pages     = {12--75},
  doi       = {10.1016/0890-5401(90)90043-H},
  timestamp = {Fri, 12 Feb 2021 22:17:02 +0100}
}

@article{dn-tgbs-JCTB19,
  author    = {Zden\v{e}k Dvo\v{r}\'ak and Sergey Norin},
  title     = {Treewidth of graphs with balanced separations},
  year      = {2019},
  journal   = {Journal of Combinatorial Theory, Series B},
  volume    = {137},
  pages     = {137--144},
  doi       = {10.1016/j.jctb.2018.12.007},
  timestamp = {Fri, 07 Jun 2024 01:00:00 +0200}
}

@article{wt-pdcng-NYJM07,
  author  = {Wood, David R. and Telle, Jan Arne},
  title   = {Planar Decompositions and the Crossing Number of Graphs with an Excluded Minor},
  year    = {2007},
  journal = {New York Journal of Mathematics},
  volume  = {13},
  pages   = {117--146},
  url     = {https://nyjm.albany.edu/j/2007/13-8.html}
}

@inproceedings{FirmanGKOW24,
  author    = {Firman, Oksana and
               Gutowski, Grzegorz and
               Kryven, Myroslav and
               Okada, Yuto and
               Wolff, Alexander},
  title     = {Bounding the Treewidth of Outer {$k$}-Planar Graphs via Triangulations},
  year      = {2024},
  booktitle = {{GD} 2024: 32nd International Symposium on Graph Drawing and Network Visualization},
  series    = {LIPIcs},
  volume    = {320},
  pages     = {14:1-14:17},
  doi       = {10.4230/LIPIcs.GD.2024.14},
  archivePrefix = {arXiv},
  eprint    = {2408.04264},
}

@InProceedings{kobayashi2025,
  author =	{Yasuaki Kobayashi and Yuto Okada and Wolff, Alexander},
  title =	{{Recognizing 2-Layer and Outer k-Planar Graphs}},
  booktitle =	{41st International Symposium on Computational Geometry (SoCG 2025)},
  pages =	{65:1--65:16},
  series =	{Leibniz International Proceedings in Informatics (LIPIcs)},
  ISBN =	{978-3-95977-370-6},
  ISSN =	{1868-8969},
  year =	{2025},
  volume =	{332},
  publisher =	{Schloss Dagstuhl -- Leibniz-Zentrum f{\"u}r Informatik},
  doi =		{10.4230/LIPIcs.SoCG.2025.65},
}

@article{ROBERTSON1986309,
title = {Graph minors. II. Algorithmic aspects of tree-width},
journal = {Journal of Algorithms},
volume = {7},
number = {3},
pages = {309-322},
year = {1986},
issn = {0196-6774},
doi = {https://doi.org/10.1016/0196-6774(86)90023-4},
author = {Neil Robertson and Paul D. Seymour},
}

@misc{houdrouge2025,
      title={Separation Number and Treewidth, Revisited}, 
      author={Hussein Houdrouge and Babak Miraftab and Pat Morin},
      year={2025},
      eprint={2503.17112},
      archivePrefix={arXiv},
      primaryClass={math.CO},
      url={https://arxiv.org/abs/2503.17112}, 
}

@misc{okada2025,
      title={Pathwidth of 2-Layer $k$-Planar Graphs}, 
      author={Yuto Okada},
      year={2025},
      eprint={2507.21864},
      archivePrefix={arXiv},
      primaryClass={cs.DM},
      url={https://arxiv.org/abs/2507.21864}, 
}

\end{document}